\DeclareMathOperator{\reduce}{{\text{Red}}}
\DeclareMathOperator{\diag}{{\text{Diag}}}
\newtheorem{subproblem}{Subproblem}
\newtheorem{definition}{Definition}
\newtheorem{problem}{Problem}
\newtheorem{lemma}{Lemma}
\newtheorem{proof}{Proof}
\newtheorem{remark}{Remark}
\newtheorem{assumption}{Assumption}
\newtheorem{theorem}{Theorem}
\DeclareMathOperator{\dist}{{\text{dist}}}
\begin{document}

\begin{frontmatter}
\title{Compositional Synthesis for Linear Systems via Convex Optimization of Assume-Guarantee Contracts} 


\author[kasra]{Kasra Ghasemi}\ead{kasra0gh@bu.edu},    
\author[sadra]{Sadra Sadraddini}\ead{sadra@mit.edu},              
\author[kasra]{Calin Belta}\ead{cbelta@bu.edu}  

\address[kasra]{Boston University, Boston, MA}  
\address[sadra]{Massachusetts Institute of Technology, Cambridge, MA}             

\begin{keyword}                           
Assume-guarantee contract; compositional correct-by-construction synthesis; distributed robust model predictive control; viable set; invariant set; zonotope; interconnected linear systems.               
\end{keyword}                             

\begin{abstract}  
We take a divide and conquer approach to design controllers for reachability problems given large-scale linear systems with polyhedral constraints on states, controls, and disturbances. Such systems are made of small subsystems with coupled dynamics. We treat the couplings as additional disturbances and use assume-guarantee (AG) contracts to characterize these disturbance sets. For each subsystem, we design and implement a robust controller locally, subject to its own constraints and contracts. The main contribution of this paper is a method to derive the contracts via a novel parameterization and a corresponding potential function that characterizes the distance to the correct composition of controllers and contracts, where all contracts are held. We show that the potential function is convex in the contract parameters. This enables the subsystems to negotiate the contracts with the gradient information from the dual of their local synthesis optimization problems in a distributed way, facilitating compositional control synthesis that scales to large systems. We present numerical examples, including a scalability study on a system with tens of thousands of dimensions, and a case study on applying our method to a distributed Model Predictive Control (MPC) problem in a power system.  
\end{abstract}
\end{frontmatter}
\section{Introduction}
Verification and synthesis of controllers for large-scale systems with hard constraints are encountered in many applications such as resource management, biology, and robotics \cite{bakule2008decentralized, antonelli2013interconnected}. The large-scale nature of such systems poses certain computational challenges for analysis and controller design. First, verification and synthesis problems usually reduce to  optimization problems with prohibitive sizes. This particularly concerns the applicability of real-time algorithms such as Model Predictive Control (MPC). Second, large-scale systems are typically networks of many compartments with physically separated actuators. A central command unit would lead to a complex communication architecture during implementation, commonly referred to as online mode. Thus decentralized computations of the controllers are preferred such that each subsystem computes its own decisions based on local information.

Roughly, decentralized control can be categorized into two approaches. In the first, the synthesis problem is solved in a centralized fashion, but the controllers are deployed in a decentralized manner in the online mode. In this case, one has to pose structural constraints and costs on the sparsity of the control law and solve a large optimization problem subject to those constraints (in offline mode) before the deployment phase. The works in \cite{rotkowitz2005characterization,fardad2011sparsity,lin2013design,nilsson2016synthesis} follow this approach. In the second, the synthesis problem itself is solved in a distributed manner. In other words, the control laws are determined by iteratively solving local synthesis problems. In this case, there is no single large optimization problem, and the corresponding computation time is significantly reduced. 

In this paper, we are interested in the latter case, which is more challenging, as one has to formulate and solve a distributed optimization problem. We consider linear discrete-time systems with additive set-valued bounded disturbances and constrained states and control inputs. Couplings across subsystems are allowed over states, controls, and constraints. The problem is to design decentralized feedback controllers such that all closed-loop trajectories of the aggregate system satisfy all the constraints under all disturbances, while a cost function is optimized. The cost can be chosen trivially as a constant if the goal is to only find feasible controllers. If this is not the case, we assume that the cost function is separable over the subsystems. Both finite-time (with possibly time-varying system dynamics and constraints) and infinite-time (in the form of robust set-invariance) problems are considered. The subsystems can communicate offline to synthesize the decentralized controllers and identify their closed-loop set-valued trajectories, also known as reachability sets. Thus, online computations use only local information. If online communication is possible, controllers and set-valued trajectories can be recomputed online in an MPC fashion with recursive feasibility guarantee. 

The system interacts with its external world, which we refer to as its environment. Exogenous disturbances are examples of such interactions. Central to our approach is a divide and conquer approach based on assume-guarantee (AG) reasoning \cite{henzinger1998you}. We use the notion of AG contracts to formalize the promises a system takes from and makes to its environment. One use of AG contracts is to decouple subsystems within a network by constructing one contract for each subsystem. By doing so, we are seeing the effect of couplings as exogenous, which can lead to mismatch between the assumptions over the environment and what happens in reality to some subsystems. Correct composition happens when there is no mismatch in a set of AG contracts. Achieving correct composition is a challenging problem. While AG contracts have been used in control synthesis \cite{henzinger2001assume,nuzzo2013contract,bogomolov2014assume}, the focus has been on how to use AG contracts given a priori. In contrast, the focus of this paper is on synthesizing contracts. The main contributions of this paper are as follows:

\begin{itemize}
    \item We provide a novel parameterization of AG contracts in the form of zonotopes and a corresponding potential function that quantifies how far a set of contracts is from correct composition. We show that the potential function and the set of parameters leading to correct composition are convex (Section \ref{sec_AG} and Section \ref{sec_AGR}).
    
    \item We develop an algorithm for subsystems to coordinate AG contracts and solve local optimization problems such that all subsystems' closed-loop responses are guaranteed to satisfy the constraints over states and control inputs. We also show that our distributed parameter synthesis method is not more conservative than the centralized version of contract optimization (Section \ref{sec_gradient}). 

    \item We include examples showing that our method can handle in a reasonable amount of time systems of orders of magnitude larger than what centralized methods can handle. We also provide an application to MPC for a power network (Section \ref{section:mpc} and Section \ref{sec_example}).  
\end{itemize}

\subsection{Related Work}
\label{sec_related}
\textbf{Distributed MPC~~~}
Distributed optimization for MPC has a long history. Earlier works such as \cite{summers2012distributed, trodden2013cooperative} focus on systems with decoupled dynamics. This way, the optimization problem needs to be only decomposed through the cost function. The work in \cite{conte2016distributed} handles coupled dynamics and formulates MPC as a distributed optimization problem that can be solved using methods such as Alternating Direction Method of Multipliers (ADMM) \cite{boyd2011distributed}. However, the authors of \cite{conte2016distributed} consider deterministic systems only and there are no set-valued couplings. In this paper, we deal with set-valued uncertainties and couplings through assume-guarantee contracts. 

\textbf{Contracts for Control~~~}
As mentioned earlier, the majority of works assume contracts given a priori and focus on how to use them for control synthesis \cite{nuzzo2013contract, saoud2020contract}. In \cite{scialanga2018robust}, the whole feasible sets of the individual subsystems were considered in the contracts. This approach is very conservative and the problem becomes infeasible in case the feasible sets are unbounded as it makes the couplings also unbounded. The work in \cite{Kim} studied parametric assume-guarantee contracts and developed a notion of small gain theorem for systems with dynamical couplings. In  \cite{sadraddini2017formal, darivianakis2018decentralized, Lin2020} the contracts were synthesized in a centralized optimization problem, but controllers were deployed in a decentralized way. The work in \cite{kim2015compositional} found contracts through binary search over box-shaped contract sets for monotone systems and specifications. The parameterization of contracts in \cite{Lin2020} leads to a non-convex optimization problem that is conservatively approximated by a convex semi definite program. In contrast to the works mentioned above, 
contract synthesis as well as the control synthesis are carried out in a distributed way in our paper. In addition, our parameterization has convexity properties, which combined with the distributability mentioned above, lead to scalability and convergence of our proposed method over a finite number of iterations.

\textbf{Controller Parameterization} Synthesizing controllers using convex programs is a well studied subject \cite{boyd2011distributed}. A seminal result is that linear feedback over past disturbances instead of the current state leads to a convex parameterization of system's predicted behaviours, which facilitates optimization subject to convex constraints \cite{goulart2006optimization}. The same idea can be applied to an output feedback setting \cite{anderson2019system}. The authors in \cite{Rakovic2007} also use a similar technique but for infinite time set invariance. The zonotope-based controllers in this paper are loosely based on disturbance feedback policies. By leveraging the properties of zonotopes and of zonotope operations, such as zonotope order reduction, we are able to provide a distributed set-based computation of robust controllers subject to bounded disturbance sets, which is correct-by-construction and scalable. 

\textbf{Conference Version~~~}
Some of the results from this paper appeared in the conference version \cite{myhscc}, where we introduced the convex parameterization and the synthesis method. The parameterization defined here is richer than the one in \cite{myhscc} 
due to the inclusion of contract set centroids, which is particularly useful for reachability problems. This paper also considers optimality in control synthesis as part of MPC, while \cite{myhscc} only focused on reach and avoid constraints.  In this paper, we also treat coupled constraints and provide a semidefinite programming approach to decouple them with the least amount of conservativesness.  Finally, complete proofs, updated examples, and an MPC case study are included in addition to the material from \cite{myhscc}.

\section{Preliminaries and Definitions}
\label{sec_prelim}

\subsection{Notation}
The sets of real, non-negative real, integers, and non-negative integers are represented by $\mathbb{R}$, $\mathbb{R}_+$, $\mathbb{N}$, and $\mathbb{N}_+$, respectively. We use $\mathbb{N}_{h_1,h_2}:= \{h_1, h_1+1,\cdots,h_2\}$, $h_1,h_2 \in \mathbb{N}, h_1 < h_2$, and $\mathbb{N}_h:=\mathbb{N}_{0:h}$. Additionally, $I_n$, $0_n$, and $\mathbb{1}_n$ represent the $n$-dimensional identity matrix, vector of zeros, and vector of ones, respectively. Given matrices $A_1, A_2, \cdots, A_N$ with the same number of rows, $[A_1,A_2, \cdots, A_N]$ is their horizontal concatenation. Given a matrix $A$, $|A|$ is its  element-wise absolute value. We use $\text{Blk}(.)$ to denote block diagonal concatenation of a matrix and $\text{sum()}$ to denote the sum of all its elements. Given $\alpha \in \mathbb{R}^n$, $\diag(\alpha)$ is the diagonal matrix composed of the entries of $\alpha$. The transpose of matrix $A$ is denoted by $A^T$. Inequality relations denoted by $\ge$ and $\le$ over matrices are interpreted element-wise, and $A \succeq 0$ indicates that $A$ is a positive semi-definite matrix. The infinity-norm of matrix $A \in \mathbb{R}^{n_1 \times n_2}, n_1 , n_2 \in \mathbb{N}_+$ is defined as  $\|A\|_\infty:=\text{max}( |A|\mathbb{1}_{n_1})$, where the max function returns the largest element.

Given $\mathbb{S}_1, \mathbb{S}_2 \subseteq \mathbb{R}^n$, their \emph{Minkowski sum} is denoted by $\mathbb{S}_1 \oplus \mathbb{S}_2 := \{s_1+s_2| s_1 \in \mathbb{S}_1, s_2 \in \mathbb{S}_2\}$. We interpret $s+\mathbb{S}$ as $\{s\} \oplus \mathbb{S}$, and $\mu(\mathbb{S})$ as $\{\mu(s)|s \in \mathbb{S}\}$, where $\mu$ is a function with $\mathbb{S}$ within its domain. The \textit{Directed Hausdorff distance} $d_{DH}(\mathbb{S}_1,\mathbb{S}_2)$ quantifies how distant is $\mathbb{S}_2$ from being a subset of $\mathbb{S}_1 $:
\begin{equation} \label{directed_huasdorff_distance_def}
    d_{DH}(\mathbb{S}_1,\mathbb{S}_2) :=  \sup_{s_2\in \mathbb{S}_2} \inf_{s_1\in \mathbb{S}_1} d(s_1,s_2) ,
\end{equation}
where $d: \mathbb{R}^n \times \mathbb{R}^n \rightarrow \mathbb{R}_+$ is a metric. For closed compact sets, $d_{DH}(\mathbb{S}_1,\mathbb{S}_2)=0$, if and only if $\mathbb{S}_2 \subseteq \mathbb{S}_1$. The Cartesian product of $\mathbb{S}_1$ and $\mathbb{S}_2$ is denoted by $
    \mathbb{S}_1 \times \mathbb{S}_2$  and the Cartesian product of $\mathbb{S}_1, \cdots, \mathbb{S}_N$ is denoted by $\prod_{i=1}^N \mathbb{S}_i$.  

\subsection{Zonotopes}
A zonotope $\mathcal{Z}(c,G)$ is defined as $ c \oplus G \mathbb{B}_p \subset \mathbb{R}^n$, where $c \in \mathbb{R}^n$ is the \emph{center}, the columns of $G \in \mathbb{R}^{n \times p}$ are the \emph{generators}, and $\mathbb{B}_p:=\{b\in \mathbb{R}^p| ||b||_\infty \leq 1\}$. The order of a zonotope is defined as $\frac{p}{n}$.
Zonotopes are convenient to manipulate with affine transformations and Minkowski sums:
\begin{subequations} \label{zonotopchar}
\begin{equation}
\label{eq_zonotope_affine}
    A \mathcal{Z}(c,G)+b=\mathcal{Z}(Ac + b,AG),
\end{equation}
\begin{equation}
\label{eq_zonotope_minkowski}
    \mathcal{Z}(c_1,G_1) \oplus \mathcal{Z}(c_2,G_2) = \mathcal{Z}(c_1+c_2,[G_1,G_2]).
\end{equation}
\end{subequations}
The volume of a full-dimensional zonotope of order 1, also known as parallelotope, is given as 
\cite{Gover2010}:
\begin{equation}\label{zonotope_volume}
    \text{Vol}(\mathcal{Z}(c,G)) = 2^{n} \text{det}(G^TG).
\end{equation}
The Cartesian product of zonotopes is:
\begin{equation*}
    \prod_{i=1}^N \mathcal{Z}\left(c_i,G_i) = \mathcal{Z}([c_1^T, 
    \cdots, c_N^T]^T,\text{Blk}(G_1,\cdots,G_N)\right).
\end{equation*}
Zonotope order reduction methods over-approximate a zonotope by another one with a smaller order. Several order reduction methods were reported in \cite{Kopetzki2018} and \cite{Yang2018}. In this paper, the \textit{Boxing method} \cite{C.Combastel,Kiihn1998} with order 1 is used to over-approximate a given zonotope $\mathcal{Z}(c,G)$ by a hyper-box:
\begin{equation} \label{reduction}
    \reduce(\mathcal{Z}(c,G)):=\mathcal{Z}(c,\diag(\sum_{i}{|g_i|})),
\end{equation}
where $g_i$ represents the $i$th column of $G$. Equation \eqref{reduction} is based on the following properties of zonotopes:
\begin{equation} \label{eq:box}
    c - \sum_{i=1}^p{|g_i|} \leq \mathcal{Z}(c,G) \leq c + \sum_{i=1}^p{|g_i|}.
\end{equation}
\begin{lemma}[Zonotope Containment \cite{sadraddini2019linear}]
\label{sadra_zon_containment}
Given two zonotopes $\mathcal{Z}(c_1,G_1)$ and $\mathcal{Z}(c_2,G_2)$, where $c_1 ,c_2 \in \mathbb{R}^q$ and $G_1 \in \mathbb{R}^{q \times r}$, $G_2 \in \mathbb{R}^{q \times s}$, we have $\mathcal{Z}(c_1,G_1) \subseteq \mathcal{Z}(c_2,G_2)$, if $\exists \Gamma \in \mathbb{R}^{s \times r}$ and $\gamma \in \mathbb{R}^s$ s.t.
\begin{subequations} \label{zon_containment}
\begin{equation}
    G_1 = G_2 \Gamma, 
\end{equation}
\begin{equation}
     c_2 - c_1 = G_2 \gamma, 
\end{equation}
\begin{equation} \label{con_zon_containment}
    \left\| [\Gamma,\gamma] \right\|_\infty \leq 1.
\end{equation}
\end{subequations}
\end{lemma}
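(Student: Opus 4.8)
The plan is to argue by pointwise inclusion: take an arbitrary element of the smaller zonotope and exhibit it as an element of the larger one. Concretely, any $x \in \mathcal{Z}(c_1,G_1)$ can be written as $x = c_1 + G_1 b$ for some $b \in \mathbb{B}_r$, i.e. $\|b\|_\infty \le 1$. Substituting the two equality hypotheses $G_1 = G_2\Gamma$ and $c_1 = c_2 - G_2\gamma$ gives
\begin{equation*}
x = c_2 - G_2\gamma + G_2\Gamma b = c_2 + G_2(\Gamma b - \gamma).
\end{equation*}
So it suffices to show that the new generator coefficient vector $b' := \Gamma b - \gamma$ lies in $\mathbb{B}_s$, since then $x = c_2 + G_2 b' \in \mathcal{Z}(c_2,G_2)$, and as $x$ was arbitrary this yields $\mathcal{Z}(c_1,G_1) \subseteq \mathcal{Z}(c_2,G_2)$.

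The second step is to bound $\|b'\|_\infty$. The key observation is to write $b'$ as a single matrix–vector product with the horizontally concatenated matrix $[\Gamma,\gamma]$:
\begin{equation*}
b' = \Gamma b - \gamma = [\Gamma,\gamma]\begin{bmatrix} b \\ -1 \end{bmatrix}.
\end{equation*}
Since $\|b\|_\infty \le 1$, the augmented vector $[b^T,-1]^T$ also has infinity-norm at most $1$. Then I invoke the fact that $\|\cdot\|_\infty$ as defined in the Notation subsection (maximum absolute row sum) is exactly the operator norm induced by the vector infinity-norm, hence submultiplicative against it: $\|[\Gamma,\gamma] v\|_\infty \le \|[\Gamma,\gamma]\|_\infty\,\|v\|_\infty$. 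Applying this with $v = [b^T,-1]^T$ and using the third hypothesis $\|[\Gamma,\gamma]\|_\infty \le 1$ gives $\|b'\|_\infty \le 1$, which completes the argument.

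I do not anticipate a genuine obstacle here; the proof is short and elementary. The only point requiring a little care is the bookkeeping around the concatenation: one must fold $-\gamma$ into the matrix product by appending a $-1$ entry to the coefficient vector, rather than trying to bound $\|\Gamma b\|_\infty$ and $\|\gamma\|_\infty$ separately (which would only give a weaker bound of $2$). It is also worth stating explicitly, for the reader, why the matrix infinity-norm defined in the paper coincides with the operator norm, so that the submultiplicativity step is justified; alternatively one can verify the componentwise bound $|(\Gamma b - \gamma)_j| \le \sum_k |\Gamma_{jk}| + |\gamma_j| \le 1$ directly from $\|b\|_\infty \le 1$ and the row-sum bound, avoiding any appeal to operator-norm theory.
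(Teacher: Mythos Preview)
Your argument is correct and is the standard direct proof of this sufficient condition. Note, however, that the paper does not supply its own proof of this lemma: it is quoted verbatim from \cite{sadraddini2019linear}, and only the weighted variant (Lemma~\ref{lemma_weighted}) is proved in the Appendix. So there is nothing in the paper to compare against beyond observing that your pointwise-inclusion argument is exactly the natural one, and that the Appendix proof of Lemma~\ref{lemma_weighted} implicitly relies on the same row-sum interpretation of $\|[\Gamma,\gamma]\|_\infty$ that you spell out. Your remark that one can alternatively verify the componentwise bound $|(\Gamma b-\gamma)_j|\le \sum_k|\Gamma_{jk}|+|\gamma_j|\le 1$ directly is a good touch, since the paper's notation subsection has a minor index typo in the definition of the matrix infinity norm, and the componentwise verification sidesteps any ambiguity.
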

While Lemma \ref{sadra_zon_containment} provides a sufficiency condition, it was shown in \cite{sadraddini2019linear} that its necessity gap, in a particular quantitative sense, is often small. The following modified version of Lemma \ref{sadra_zon_containment} enables us to independently scale each generator of 
$G_2$.
\begin{lemma}[Weighted Zonotope Containment] \label{lemma_weighted}
Given two zonotopes $\mathcal{Z}(\bar{c}_1,G_1)$ and $\mathcal{Z}(\bar{c}_2,G_2)$ (in the same format as Lemma \ref{sadra_zon_containment}), and a vector $\alpha \in \mathbb{R}^s_+$, we have  $\mathcal{Z}(\bar{c}_1,G_1) \subseteq \mathcal{Z}(\bar{c}_2,G_2\diag(\alpha))$, if the conditions in Lemma \ref{sadra_zon_containment} hold while constraint (\ref{con_zon_containment}) changes to the following element-wise inequality:
\begin{equation} \label{con_zon_addmod}
       [|\Gamma|,|\gamma|]  \mathbb{1}_s \leq \alpha.
\end{equation}
\end{lemma}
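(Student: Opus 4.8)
The plan is to reduce the weighted containment claim to a direct application of Lemma~\ref{sadra_zon_containment} (the unweighted zonotope containment lemma) by absorbing the scaling vector $\alpha$ into the generator matrix and then unwinding the change of variables. Concretely, set $G_2' := G_2 \diag(\alpha)$, so that $\mathcal{Z}(\bar c_2, G_2 \diag(\alpha)) = \mathcal{Z}(\bar c_2, G_2')$. By Lemma~\ref{sadra_zon_containment}, a sufficient condition for $\mathcal{Z}(\bar c_1, G_1) \subseteq \mathcal{Z}(\bar c_2, G_2')$ is the existence of $\Gamma' \in \mathbb{R}^{s \times r}$ and $\gamma' \in \mathbb{R}^s$ with $G_1 = G_2' \Gamma'$, $\bar c_2 - \bar c_1 = G_2' \gamma'$, and $\|[\Gamma', \gamma']\|_\infty \le 1$. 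The idea is to relate $(\Gamma', \gamma')$ to the $(\Gamma, \gamma)$ appearing in the hypotheses of the lemma.

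First I would assume the conditions in Lemma~\ref{sadra_zon_containment} hold for $(\bar c_1, G_1)$ and $(\bar c_2, G_2)$, i.e.\ there exist $\Gamma, \gamma$ with $G_1 = G_2 \Gamma$ and $\bar c_2 - \bar c_1 = G_2 \gamma$, together with the modified constraint~\eqref{con_zon_addmod}, namely $[|\Gamma|, |\gamma|]\mathbb{1}_s \le \alpha$. Note that since the left-hand side of~\eqref{con_zon_addmod} is nonnegative and $\alpha \ge 0$ is required, this is consistent with $\alpha \in \mathbb{R}^s_+$. Next I would define $\Gamma' := \diag(\alpha)^{-1} \Gamma$ and $\gamma' := \diag(\alpha)^{-1}\gamma$, so that $G_2' \Gamma' = G_2 \diag(\alpha) \diag(\alpha)^{-1}\Gamma = G_2 \Gamma = G_1$ and similarly $G_2' \gamma' = G_2 \gamma = \bar c_2 - \bar c_1$, giving the first two conditions of Lemma~\ref{sadra_zon_containment}. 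The remaining step is to verify $\|[\Gamma', \gamma']\|_\infty \le 1$: the $j$-th row of $[\Gamma', \gamma']$ is $\alpha_j^{-1}$ times the $j$-th row of $[\Gamma, \gamma]$, so its absolute row sum is $\alpha_j^{-1} \bigl([|\Gamma|, |\gamma|]\mathbb{1}_s\bigr)_j \le \alpha_j^{-1}\alpha_j = 1$ by~\eqref{con_zon_addmod}; taking the max over $j$ gives $\|[\Gamma', \gamma']\|_\infty \le 1$. Applying Lemma~\ref{sadra_zon_containment} then yields $\mathcal{Z}(\bar c_1, G_1) \subseteq \mathcal{Z}(\bar c_2, G_2 \diag(\alpha))$, as claimed.

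The only subtlety — and the step I would treat most carefully — is the possibility that some entries of $\alpha$ vanish, which makes $\diag(\alpha)$ non-invertible and the change of variables above ill-defined. If $\alpha_j = 0$, then~\eqref{con_zon_addmod} forces the entire $j$-th row of $[|\Gamma|, |\gamma|]$ to be zero, hence the $j$-th row of $[\Gamma, \gamma]$ is zero, so the $j$-th generator of $G_2 \diag(\alpha)$ is simply unused; I would handle this by either restricting attention to the indices with $\alpha_j > 0$ (dropping the zero columns of $G_2\diag(\alpha)$ and the corresponding rows of $\Gamma, \gamma$, which changes neither zonotope nor the validity of the containment certificate) or, equivalently, by a limiting/continuity argument. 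With that caveat dispatched, the proof is just the bookkeeping of the two paragraphs above; there is no genuine analytic obstacle, since all the work is already done by Lemma~\ref{sadra_zon_containment}.
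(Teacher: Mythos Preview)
Your proposal is correct and follows essentially the same approach as the paper: both arguments use the change of variables between $(\Gamma,\gamma)$ with $G_1=G_2\Gamma$, $\bar c_2-\bar c_1=G_2\gamma$ and $(\Gamma',\gamma')=\diag(\alpha)^{-1}(\Gamma,\gamma)$ satisfying the conditions of Lemma~\ref{sadra_zon_containment} applied to $G_2\diag(\alpha)$, and observe that the $\ell_\infty$ bound on $[\Gamma',\gamma']$ is exactly the row-sum bound~\eqref{con_zon_addmod}. Your treatment is in fact slightly more careful in that it explicitly addresses the degenerate case $\alpha_j=0$, which the paper's proof sidesteps by tacitly assuming all entries of $\alpha$ are strictly positive.
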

{\bf Proof:} see Appendix

\begin{lemma}\label{hasdurff_distance_computation}\textbf{(Directed Hausdorff Distance Computation \cite{sadraddini2019linear})}
Given two closed sets $\mathbb{S}_1$ and $\mathbb{S}_2$ and $\|.\|_\infty$ as the underlying metric, $d_{DH}(\mathbb{S}_1,\mathbb{S}_2)$ is the optimal value for the following linear program:
\begin{subequations} \label{directed_hausdorff_distance_computation}
\begin{align}
    \underset{d}{\text{argmin}} \quad & d \\
    \textrm{s.t.}  \quad & \mathbb{S}_2 \subseteq \mathbb{S}_1 \oplus d \mathcal{Z}(0_n,I_n) , \\
    & d \geq 0. 
\end{align}
\end{subequations}
\end{lemma}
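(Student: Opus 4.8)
The plan is to show that the feasible set of the program in \eqref{directed_hausdorff_distance_computation} is exactly the closed half-line $\big[\,d_{DH}(\mathbb{S}_1,\mathbb{S}_2),\,\infty\big)$; the claim that its optimal value equals $d_{DH}(\mathbb{S}_1,\mathbb{S}_2)$ then follows immediately. I would assume $\mathbb{S}_1$ and $\mathbb{S}_2$ nonempty and deal with the degenerate case ($\mathbb{S}_1=\emptyset$, where the program is infeasible and $d_{DH}=+\infty$) separately.

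First I would rewrite the inflation term. Using the affine rule \eqref{eq_zonotope_affine}, $d\,\mathcal{Z}(0_n,I_n)=\mathcal{Z}(0_n,dI_n)=dI_n\mathbb{B}_n=\{v\in\mathbb{R}^n:\ \|v\|_\infty\le d\}$, i.e. the closed $\ell_\infty$-ball of radius $d\ge 0$. Hence, for a fixed $d\ge 0$, unfolding the Minkowski sum shows that the constraint $\mathbb{S}_2\subseteq \mathbb{S}_1\oplus d\,\mathcal{Z}(0_n,I_n)$ holds if and only if every $s_2\in\mathbb{S}_2$ can be written as $s_2=s_1+v$ with $s_1\in\mathbb{S}_1$ and $\|v\|_\infty\le d$ --- equivalently, if and only if for every $s_2\in\mathbb{S}_2$ there exists $s_1\in\mathbb{S}_1$ with $\|s_1-s_2\|_\infty\le d$.

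Next I would connect this to the $\sup\inf$ in \eqref{directed_huasdorff_distance_def}. Since $s_1\mapsto\|s_1-s_2\|_\infty$ is coercive, its sublevel sets are compact, so their intersection with the closed set $\mathbb{S}_1$ is compact and the infimum $\inf_{s_1\in\mathbb{S}_1}\|s_1-s_2\|_\infty$ is attained. Therefore ``there exists $s_1\in\mathbb{S}_1$ with $\|s_1-s_2\|_\infty\le d$'' is equivalent to $\inf_{s_1\in\mathbb{S}_1}\|s_1-s_2\|_\infty\le d$. Quantifying over all $s_2\in\mathbb{S}_2$, and using that ``$f(s_2)\le d$ for all $s_2$'' is the same as ``$\sup_{s_2}f(s_2)\le d$'', the containment constraint is equivalent to $\sup_{s_2\in\mathbb{S}_2}\inf_{s_1\in\mathbb{S}_1}\|s_1-s_2\|_\infty\le d$, i.e. to $d_{DH}(\mathbb{S}_1,\mathbb{S}_2)\le d$. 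Together with $d\ge 0$ and $d_{DH}\ge 0$, this identifies the feasible set as $\big[\,d_{DH}(\mathbb{S}_1,\mathbb{S}_2),\,\infty\big)$, whose minimum is $d_{DH}(\mathbb{S}_1,\mathbb{S}_2)$. I would finish by noting that when $\mathbb{S}_1$ and $\mathbb{S}_2$ are polytopes or zonotopes the set-containment constraint admits an exact finite encoding by linear (in)equalities --- by halfspace/vertex duality for polytopes, or by Lemma \ref{sadra_zon_containment} for zonotopes --- so, after adjoining those auxiliary variables, the problem is a genuine linear program in $d$.

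The step I expect to be the main obstacle is the attainment argument: under the mere hypothesis that $\mathbb{S}_1$ is closed (not necessarily compact), one must justify that $\inf_{s_1\in\mathbb{S}_1}\|s_1-s_2\|_\infty$ is achieved, so that the non-strict inequality $\le d$ corresponds exactly to membership in $\mathbb{S}_1\oplus d\,\mathcal{Z}(0_n,I_n)$ and not merely to membership in its closure. Coercivity of $\|\cdot\|_\infty$ supplies this and is the only nontrivial use of the hypotheses; the remaining ingredients --- rewriting $d\,\mathcal{Z}(0_n,I_n)$, unfolding the Minkowski sum, and the ``$\forall s_2$ versus $\sup_{s_2}$'' equivalence --- are routine and require no compactness.
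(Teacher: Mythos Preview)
Your argument is correct and is the natural one: identify $d\,\mathcal{Z}(0_n,I_n)$ with the closed $\ell_\infty$-ball of radius $d$, unfold the Minkowski sum, use closedness of $\mathbb{S}_1$ together with coercivity of $\|\cdot\|_\infty$ to get attainment of the inner infimum, and conclude that the feasible set is exactly $[\,d_{DH}(\mathbb{S}_1,\mathbb{S}_2),\infty)$. Note, however, that the paper does not actually supply a proof of this lemma: it is quoted from \cite{sadraddini2019linear} and stated without argument, so there is no in-paper proof to compare against. Your write-up therefore fills a gap rather than paralleling an existing proof; the approach you take is the standard one and matches what the cited reference establishes.
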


\subsection{Constrained Linear Discrete-Time Systems}
A discrete-time (time-varying) linear system is characterized by the following equation:
\begin{equation}\label{singlesystem_LTV}
    x_{t+1} = A_tx_t+B_t u_t+d_t,
\end{equation}
where $x_t \in \mathbb{X}_t \subseteq \mathbb{R}^n$, $u_t \in \mathbb{U}_t \subseteq \mathbb{R}^m$, and $d_t \in \mathbb{D}_t \subseteq \mathbb{R}^n$
are the state, control, and disturbance at time $t \in \mathbb{N}$, respectively. The polytopic sets $\mathbb{X}_t$, $\mathbb{U}_t$, and $\mathbb{D}_t$ define time varying bounds and are assumed to be given. The matrices $A_t \in \mathbb{R}^{n\times n}$ and $B_t \in \mathbb{R}^{n \times m}$ may be time dependent. 
If $A_t$, $B_t$ and the sets $\mathbb{X}_t , \mathbb{U}_t$, and $\,\mathbb{D}_t$ are all time-invariant, the system is linear time-invariant (LTI). If at least one of them is time-variant, the system is linear time-variant (LTV) and we are mainly interested in a time-limited response over a finite horizon $h \in \mathbb{N}_+$ (i.e. $t\in \mathbb{N}_h$). Due to space limits, the optimization problems and algorithms shown here are for the LTV class of problems. The LTI versions can be seen as particular cases and can be formulated easily. 

A control policy $\mu$ is characterized by a set of functions $\mu_t : \mathbb{X}_t \rightarrow \mathbb{U}_t, t \in \mathbb{N}_h$, where $h \in \mathbb{N}_+$ is a finite horizon. For infinite horizon, the policy $\mu(.): \mathbb{X} \rightarrow \mathbb{U}$ is not a function of time. 
\begin{definition}[Finite-time Viable Sets] \label{def:viable}
For a given horizon $h \in \mathbb{N}_+$, a sequence of sets $\Omega_0,\Omega_1, ... , \Omega_h$ for system \eqref{singlesystem_LTV} is a sequence of \emph{viable} sets, if for all $t \in \mathbb{N}_h$, $\Omega_t \subseteq X_t$ and there exists a policy $\mu_t$ such that $\Theta_t \subseteq U_t$, where $\Theta_t :=\mu_t( \Omega_t)$, and 
\begin{equation} \label{eq:4}
    \forall t \in \mathbb{N}_{h-1}, \forall x_t\in \Omega_t, \forall d_t\in D_t   \Rightarrow x_{t+1}\in \Omega_{t+1}.
\end{equation}
$\Theta_t$ is called action set.
\end{definition}
\begin{definition}[Infinite-time Viable Set \cite{BLANCHINI19991747}]  \label{def:RCI}
A set $\Omega \subseteq X$ for an LTI system is an infinite-time viable set, also known as robust control invariant (RCI) set, if there exists a control policy $\mu$ such that $\Theta:= \mu(\Omega) \subseteq U$ and 
\begin{equation}
    \forall t\in \mathbb{N}, \forall x_t\in \Omega ,  \forall d_t \in D  \Rightarrow x_{t+1} \in \Omega.
\end{equation}
\end{definition}

\section{Problem Statement and Approach}

Consider a network of linear subsystems that are dynamically coupled in the following form:
\begin{equation}
\label{subsystems_LTV}
\begin{array}{ll}
     x_{i,t+1}= & A_{ii,t}x_{i,t} + B_{ii, t}u_{i,t} + d_{i,t}\\ & + \displaystyle \sum_{j\ne i}{A_{ij,t}x_{j,t}} + \displaystyle \sum_{j\ne i}{B_{ij,t}u_{j,t}},
    \end{array}
\end{equation}
where $x_{i,t} \in \mathbb{R}^{n_i}$, $u_{i,t} \in \mathbb{R}^{m_i}$, and $d_{i,t}\in D_{i,t}, D_{i,t} \subseteq \mathbb{R}^{n_i}$ are the state, control, and the disturbance for subsystem $i \in \mathcal{I}$, $\mathcal{I}= \mathbb{N}_{1:\eta}$, and $\eta$ is the number of subsystems. Also, $D_{i,t} = \mathcal{Z}(\bar{d}_{i,t}, G_{i,t}^{d})$, where $\bar{d}_{i,t} \in \mathbb{R}^{n_i}$ and $G_{i,t}^d \in \mathbb{R}^{n_i \times p_i^{d_t}}$ is the given disturbance set for subsystem $i$. 
We denote the disturbance set for the aggregate system as $D_t=\prod_{i\in \mathcal{I}} D_{i,t}$.
The matrices $A_{ii,t} \in \mathbb{R}^{n_i \times n_i}$ and $B_{ii,t} \in \mathbb{R}^{n_i \times m_i}$ characterize the (time-variant) internal dynamics of subsystem $i$. Also, $A_{ij,t} \in \mathbb{R}^{n_i \times n_j}$ and $B_{ij,t} \in \mathbb{R}^{n_i \times m_j}$ characterize the coupling effects of subsystem $j$ on subsystem $i$. 
\begin{remark}
If the sets $\mathcal{D}_{i,t}$ are not zonotopic, they can be over-approximated by zonotopic sets. Furthermore, if the disturbance set is given for the aggregated system, it may be decomposed into sets for each subsystem $i$, such that the aggregated disturbance set is the subset of the Cartesian product of the new sets for each subsystem.
\end{remark}
We consider coupled constraints over state $(X_t)$ and control input $(U_t)$. These sets can be polytopic sets or zonotopes. For the network of subsystems in \eqref{subsystems_LTV}, the constraints are
\begin{equation} \label{decomposition}
    [x_{1,t}^T , \cdots , x^T_{\eta,t}]^T \in X_t, \hspace{3mm}
    [u^T_{1,t} ,\cdots , u^T_{\eta,t}]^T \in U_t.
\end{equation}
We are now ready to formulate the problem of finding decentralized viable sets and their corresponding decentralized controllers $\mu_i(.)$ with respect to the constraints.

\begin{problem}\label{problem_viable}\textbf{(Decentralized Finite-time Viable Sets)}
Given a network of interconnected subsystems in the form of \eqref{subsystems_LTV} and the horizon $h \in \mathbb{N}_+$, find sets $\Omega_{i,t}$ and decentralized controllers $\mu_{i,t}(.), \forall i \in \mathcal{I}$ and $t \in \mathbb{N}_h$ such that $\prod_{i \in \mathcal{I}}\Omega_{i,t} \subseteq X_t$, $\prod_{i \in \mathcal{I}}\Theta_{i,t} \subseteq U_t$, and 
\begin{equation} \label{eq:subrci}
\begin{array}{ll}
        & \forall x_{i,t} \in \Omega_{i,t},  \forall u_{j,t}\in \Theta_{j,t}, (j\ne i),    \forall d_{i,t} \in D_{i,t} \\ & \Rightarrow x_{i,t+1}\in \Omega_{i,t+1},
    \end{array}
\end{equation}
where $\Theta_{i,t}=\mu_{i,t}(\Omega_{i,t})$.
\end{problem}
In essence, Problem \ref{problem_viable} is a tube-based trajectory optimization problem with reachability constraints for $h$ steps ahead, while taking into account all possible disturbances to ensure guaranteed reachability. This is similar to a robust MPC formulation with horizon $h$ when solved recursively, but without an objective function because the goal can be achieved by reachability set containment constraints. Thus, the objective function is optional. A terminal condition can also be introduced to the MPC formulation for the purpose of ensuring stability and recursive feasibility.

In addition to the previous problem, which focuses on the time-limited response of an LTV system, we also consider the infinite-time response of an LTI system:

\begin{problem} \label{Problem_RCI} \textbf{(Decentralized Infinite-time Viable Sets)}
For the particular case when each subsystem in (\ref{subsystems_LTV}) is time invariant, find sets $\Omega_i$ and decentralized controllers $\mu_i$(.) for $\forall i \in \mathcal{I}$, such that $\prod_{i \in \mathcal{I}}\Omega_i \subseteq X$, $\prod_{i \in \mathcal{I}}\Theta_i \subseteq U$, and
\begin{equation} \label{RCI_prob1}
\begin{array}{ll}
        & \forall x_{i,t} \in \Omega_i , \forall x_{j,t}\in \Omega_j , \forall u_{j,t}\in \Theta_j (j\ne i), \forall d_{i,t} \in D_i \\  & \Rightarrow x_{i,t+1}\in \Omega_i,
    \end{array}
\end{equation}
where $\Theta_i= \mu_i(\Omega_i)$. 
\end{problem}
It is worth noting that the concept of the infinite-time viable sets and Problem \ref{Problem_RCI} can be extended to $T$-periodic systems where $A_{t+T}=A_t, B_{t+T} = B_{t}, X_{t+T}=X_t, U_{t+T}=U_t, D_{t+T}=D_t, \forall t \in \mathbb{N}$. We omit studying this class of systems in this paper.

We wish to handle the coupled constraints \eqref{decomposition} separately. To this goal, we consider an additional step of decoupling the constraints. This allows each subsystem to impose constraints independently of the others, while still satisfying the coupled constraints when the subsystems are combined, therefore allowing for distributed computations. Consider the following domain and range for each controller:
\begin{equation}
\mu_{i,t}(.): X_{i,t} \rightarrow U_{i,t},    
\end{equation} 
where $X_{i,t}$ and $U_{i,t}$ are admissible sets in the state space and control space for subsystem $i$ at time $t$, respectively.
However, since the constraints are given in coupled forms, $X_{i,t}$ and $U_{i,t}$ are unknown and need to be found by decomposing $X_t$ and $U_t$, respectively. We address this in the following subproblem:
\begin{subproblem} [Set Decomposition] \label{subproblem}
Given a set $X \subset \mathbb{R}^{n}$ in the form of a zonotope and a set of integers $n_i$ ,$\forall i \in \mathcal{I}$ $(\sum_i n_i = n)$, find sets $X_i = \mathcal{Z}(c_i,G^x_i)$ where $c_i \in \mathbb{R}^{n_i}$ and $G^x_i \in \mathbb{R}^{n_i \times p^x_i} , p^x_i \in \mathbb{N}_+$ with the maximum volume for the set $\prod_{i \in \mathcal{I}}X_i$, such that 
$$\forall x_i\in X_i \Rightarrow [x_1^T, x_2^T, ..., x_\eta^T]^T \in X.$$
\end{subproblem}
{\bf Solution:} See Appendix.

Defining this subproblem for all time steps of $X_t$ and $U_t$ will lead to full decomposition of constraints.

\section{Assume-Guarantee Contracts}
\label{sec_AG}
In this section, we formalize assume-guarantee contracts for one system and provide details on the convex parameterization of contracts and controllers.  
\subsection{Definitions} 
\begin{definition}[Assume-Guarantee Contract] \label{definition_AG}
An assume-guarantee contract for system \eqref{singlesystem_LTV} is a pair $\mathcal{C} = (\mathcal{A},\mathcal{G})$, where:
\begin{itemize}
    \item $\mathcal{A}$ is the assumption, which is a sequence of disturbance sets $\mathcal{D}_t, t \in \mathbb{N}_{h-1}$;
    \item $\mathcal{G}$ is the guarantee, which is a sequence of tuples $(\mathcal{X}_t,\mathcal{U}_t)$, where $\mathcal{X}_t$ and $\mathcal{U}_t$ are two sets at time $t \in \mathbb{N}_h$ in the state space and control space, respectively.
\end{itemize}
\end{definition}

\begin{definition} [Contract Validity] \label{definition_validity}
A contract is \emph{valid} if its guarantee respects the system's constraints $\mathcal{X}_t \subseteq X_t, \forall t \in \mathbb{N}_h, \mathcal{U}_t \subseteq U_t, \forall t \in \mathbb{N}_{h-1}$.  
\end{definition}

\begin{definition} [Contract Satisfiability] \label{definition_satsifiablity}
A valid contract is \emph{satisfiable} if it is possible to find a control policy and viable sets such that $\Omega_t \subseteq \mathcal{X}_t, \forall t \in \mathbb{N}_h, \Theta_t \subseteq \mathcal{U}_t, \forall t \in \mathbb{N}_{h-1}$.
\end{definition}


\subsection{Finite Horizon Contract Satisfiability}
For a single system with a given disturbance bound, we show that a satisfiable contract can be found using convex programs, which encode a specific form of control policies. 
\begin{theorem} \label{Thrm_viable set_single} 
Given an LTV system in the form (\ref{singlesystem_LTV}) with the bounded disturbance set $D_t = \mathcal{Z}( \bar{d}_t , G^d_t )$, where $\bar{d}_t \in \mathbb{R}^n$ and $G^d_t \in \mathbb{R}^{n \times p_t}$, a finite horizon contract is satisfiable if $\exists k \in \mathbb{N}$, vectors $\bar{\mathrm{x}}_t\in \mathbb{R}^n , \bar{\mathrm{u}}_t \in \mathbb{R}^m$, and matrices $ T_t\in \mathbb{R}^{n \times l_t}$ and $M_t\in \mathbb{R}^{m \times l_t}$, where $l_0=k$ and $l_{t \ne 0} = k + \sum_{\hat{t}=0}^{t-1}{p_{\hat{t}}}$ such that the following relations hold:
\begin{subequations} \label{singleviablecon}
\begin{equation}\label{single_viable_constraint}
    [A_tT_t+B_tM_t , G^d_t] = T_{t+1}, \quad \forall t \in \mathbb{N}_{h-1}, 
\end{equation}
\begin{equation}\label{single_viable_constraint_center}
    A_t \bar{\mathrm{x}}_t+ B_t \bar{\mathrm{u}}_t + \bar{d}_t = \bar{\mathrm{x}}_{t+1}, \quad \forall t \in \mathbb{N}_{h-1},
\end{equation}
\begin{equation}\label{singleviable_statecon}
    \mathcal{Z}(\bar{\mathrm{x}}_t,T_t) \subseteq X_t, \quad \forall t \in \mathbb{N}_{h}, 
\end{equation}
\begin{equation}\label{singleviable_controlcon}
    \mathcal{Z}(\bar{\mathrm{u}}_t,M_t) \subseteq U_t, \quad \forall t \in \mathbb{N}_{h-1}. 
\end{equation}
\end{subequations}
Then $ \Omega_t=\mathcal{Z}(\bar{\mathrm{x}}_t,T_t) , t\in \mathbb{N}_h$ is the sequence of viable sets for horizon $h$ and $\Theta_t=\mathcal{Z}(\bar{\mathrm{u}}_t,M_t)$ is the sequence of action sets. Moreover, the controller $\mu_t(x_t)$ can be computed by:
\begin{subequations}\label{eq_control_finite}
\begin{equation}
    \mu_t(x_t)=\bar{\mathrm{u}}_t+M_t \zeta(x),
\end{equation}
\begin{equation}
    x_t=\bar{\mathrm{x}}_t+T_t\zeta(x),
\end{equation}
\begin{equation}
    \zeta \in \mathbb{B}_{l_t},
\end{equation}
\end{subequations}

\end{theorem}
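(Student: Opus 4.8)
The plan is to exhibit the controller \eqref{eq_control_finite} explicitly and show by induction on $t$ that, under this controller, every reachable state at time $t$ lies in $\mathcal{Z}(\bar{\mathrm{x}}_t,T_t)$, which by \eqref{singleviable_statecon} is contained in $X_t$, while the corresponding control input lies in $\mathcal{Z}(\bar{\mathrm{u}}_t,M_t)$, which by \eqref{singleviable_controlcon} is contained in $U_t$. The key structural idea is that the columns of $T_t$ are organized so that the first $k$ columns correspond to an initial "free" zonotope ball $\mathbb{B}_k$ (a design choice parameterizing the set of admissible initial conditions) and the remaining $\sum_{\hat t=0}^{t-1} p_{\hat t}$ columns correspond, block by block, to the accumulated disturbance generators $G^d_0,\dots,G^d_{t-1}$. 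The controller is a disturbance-feedback policy in zonotope form: at time $t$ the state is written as $x_t = \bar{\mathrm{x}}_t + T_t\zeta$ for some $\zeta\in\mathbb{B}_{l_t}$, and the control is chosen as $u_t = \bar{\mathrm{u}}_t + M_t\zeta$ using the \emph{same} coordinate vector $\zeta$.

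First I would set up the induction. Base case $t=0$: any $x_0\in\Omega_0=\mathcal{Z}(\bar{\mathrm{x}}_0,T_0)$ can be written as $x_0=\bar{\mathrm{x}}_0+T_0\zeta_0$ with $\zeta_0\in\mathbb{B}_k=\mathbb{B}_{l_0}$; set $u_0=\bar{\mathrm{u}}_0+M_0\zeta_0$, which lies in $\mathcal{Z}(\bar{\mathrm{u}}_0,M_0)\subseteq U_0$. Inductive step: suppose $x_t=\bar{\mathrm{x}}_t+T_t\zeta_t$ with $\zeta_t\in\mathbb{B}_{l_t}$, and the controller sets $u_t=\bar{\mathrm{u}}_t+M_t\zeta_t$. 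Then, using the dynamics \eqref{singlesystem_LTV} and writing $d_t=\bar d_t+G^d_t\xi_t$ with $\xi_t\in\mathbb{B}_{p_t}$,
\begin{align*}
x_{t+1} &= A_t x_t + B_t u_t + d_t \\
&= \bigl(A_t\bar{\mathrm{x}}_t+B_t\bar{\mathrm{u}}_t+\bar d_t\bigr) + (A_tT_t+B_tM_t)\zeta_t + G^d_t\xi_t.
\end{align*}
By \eqref{single_viable_constraint_center} the first parenthesized term equals $\bar{\mathrm{x}}_{t+1}$, and by \eqref{single_viable_constraint} the matrix $[A_tT_t+B_tM_t,\ G^d_t]$ equals $T_{t+1}$. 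Hence $x_{t+1}=\bar{\mathrm{x}}_{t+1}+T_{t+1}\zeta_{t+1}$ where $\zeta_{t+1}:=[\zeta_t^T,\xi_t^T]^T\in\mathbb{B}_{l_t}\times\mathbb{B}_{p_t}=\mathbb{B}_{l_{t+1}}$, since $l_{t+1}=l_t+p_t$. This closes the induction: $x_{t+1}\in\mathcal{Z}(\bar{\mathrm{x}}_{t+1},T_{t+1})\subseteq X_{t+1}$ by \eqref{singleviable_statecon}, and the next control $u_{t+1}=\bar{\mathrm{u}}_{t+1}+M_{t+1}\zeta_{t+1}\in\mathcal{Z}(\bar{\mathrm{u}}_{t+1},M_{t+1})\subseteq U_{t+1}$ by \eqref{singleviable_controlcon}.

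This shows the sequence $\Omega_t=\mathcal{Z}(\bar{\mathrm{x}}_t,T_t)$ satisfies Definition \ref{def:viable}: the reachability inclusion \eqref{eq:4} holds with action sets $\Theta_t=\mathcal{Z}(\bar{\mathrm{u}}_t,M_t)$, and validity/satisfiability of the contract follow from \eqref{singleviable_statecon}–\eqref{singleviable_controlcon} together with the definitions in Section~\ref{sec_prelim}. The one subtlety worth stating carefully — and the step I expect to need the most care — is that the controller \eqref{eq_control_finite} is well defined as a \emph{state} feedback: given $x_t$, one must be able to recover a valid coordinate vector $\zeta$ with $x_t=\bar{\mathrm{x}}_t+T_t\zeta$ and $\zeta\in\mathbb{B}_{l_t}$. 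Along closed-loop trajectories the induction produces such a $\zeta$ constructively (by concatenating past disturbance coordinates), so the policy is realizable online by keeping track of the disturbance history; if $T_t$ does not have full row rank or the preimage is not unique, any measurable selection works, and one can note that this is exactly the standard disturbance-feedback reparameterization (cf.\ \cite{goulart2006optimization}) recast in zonotope form. I would close by remarking that the feasibility of \eqref{singleviablecon} is itself a convex (indeed, by Lemma \ref{sadra_zon_containment}, linear) program once the zonotope-containment constraints \eqref{singleviable_statecon}–\eqref{singleviable_controlcon} are encoded, which is the point of the theorem.
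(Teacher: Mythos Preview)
Your proposal is correct and follows essentially the same route as the paper: substitute the zonotope-parameterized control law into the dynamics, use \eqref{single_viable_constraint}--\eqref{single_viable_constraint_center} to identify the next reachable set with $\mathcal{Z}(\bar{\mathrm{x}}_{t+1},T_{t+1})$, and invoke \eqref{singleviable_statecon}--\eqref{singleviable_controlcon} for constraint satisfaction. Your version is in fact more explicit than the paper's---you spell out the induction and the concatenation $\zeta_{t+1}=[\zeta_t^T,\xi_t^T]^T$, and you flag the measurable-selection issue for $\zeta(x)$ that the paper leaves implicit---but the underlying argument is the same.
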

\begin{proof}
The proof is by construction. Substituting \eqref{eq_control_finite} in (\ref{singlesystem_LTV}) yields:
\begin{equation}
    x_{t+1}     \in   A_t(T_t \zeta + \bar{\mathrm{x}}_t)+B_t(M_t \zeta + \bar{\mathrm{u}}_t ) + \bar{d}_t \oplus G^d_t \mathbb{B}_{p_t}, 
\end{equation}
where the right hand side set is equal to
\begin{equation} \label{next_set1}
    \{A_t \bar{\mathrm{x}}_t+ B_t \bar{\mathrm{u}}_t + \bar{d}_t\}  \oplus [A_t T_t + B_t M_t  ,G^d_t] \mathbb{B}_{l_t+p_t}.
\end{equation}
By definition, we know that
\begin{equation} \label{next_set2}
x_{t+1} \in \mathcal{Z}(\bar{\mathrm{x}}_{t+1} , T_{t+1}).
\end{equation}
By using the Minkowski sum property of zonotopes shown in \eqref{eq_zonotope_minkowski} and by assuming that the two sets shown in \eqref{next_set1} and \eqref{next_set2} are identical, it is straightforward to reach \eqref{single_viable_constraint} and \eqref{single_viable_constraint_center}. The next two constraints \eqref{singleviable_statecon} and \eqref{singleviable_controlcon} are imposing the constraints at each time step.
\end{proof}

Using the linear encoding for the set containment problem proposed in \cite{sadraddini2019linear}, containment constraints \eqref{singleviable_statecon} and \eqref{singleviable_controlcon} can be encoded as linear constraints with respect to $T_t, M_t, \bar{x}_t,$ and $ \bar{u}_t$. The cost function can be chosen depending on the application. We typically choose to minimize the summation of Frobenious norms of $T_t$ for $t\in \mathbb{N}_h$ as a heuristic to minimize the size of the viable sets. 

\begin{remark}
Note that the order of zonotope $\Omega_t$ is increasing at each time step. This makes the number of variables and constraints in the program grow quadratically with $h$. The complexity can be decreased by fixing the number of columns in matrices $T_t$ and $M_t$ at $k$ and by changing equation (\ref{single_viable_constraint}) to
\begin{equation}
    [A_t T_t + B_t M_t , G^d_t] =  [0_{n \times p_t} , T_{t+1}].
\end{equation}
However, this modification leads to a more conservative computation and may cause infeasibility.
\end{remark}

\subsection{Infinite Horizon Contract Satisfiability}
Inspired by the method in \cite{Rakovic2007}, we provide a linear programming approach to compute RCI sets. 
\begin{theorem} 
\label{thrm_single_RCI}
Given an LTI system in the form (\ref{singlesystem_LTV}) with the bouned disturbance set $D = \mathcal{Z}( \bar{d} , G^d )$, where $\bar{d} \in \mathbb{R}^n$ and $G^d \in \mathbb{R}^{n \times p}$, an infinite horizon contract is satisfiable if
 $\exists k \in \mathbb{N}, \beta \in [0,1)$, vectors $\bar{\mathrm{x}}\in \mathbb{R}^n , \bar{\mathrm{u}} \in \mathbb{R}^m$, and matrices $ T\in \mathbb{R}^{n \times k}$, $M\in \mathbb{R}^{m \times k}$, and $E\in \mathbb{R}^{n \times p}$, such that the following relations hold:
\begin{subequations}\label{singlercicon}
\begin{equation}\label{eq:thrm1}
    [AT+BM , G^d] = [ E , T],
\end{equation}
\begin{equation}\label{Econ}
    \mathcal{Z}(0,E) \subseteq \mathcal{Z}(0,\beta G^d), 
\end{equation}
\begin{equation}
    A\bar{\mathrm{x}} + B \bar{\mathrm{u}} + \bar{d} = \bar{\mathrm{x}},
\end{equation}
\begin{equation}\label{eq:thrm2} 
    \mathcal{Z}(\bar{\mathrm{x}},\dfrac{1}{1-\beta}T) \subseteq X,
\end{equation}
\begin{equation}\label{eq:thrm3}
    \mathcal{Z}(\bar{\mathrm{u}},\dfrac{1}{1-\beta} M) \subseteq U.
\end{equation}
\end{subequations}
Then $\Omega=\mathcal{Z}(\bar{\mathrm{x}},(1-\beta)^{-1}T)$ is a RCI set and the action set is $\Theta=\mathcal{Z}(\bar{\mathrm{u}},(1-\beta)^{-1}M)$. Furthermore, the controller can be computed by
\begin{subequations} \label{eq_control_infinite}
\begin{equation}
    \mu(x)=\bar{\mathrm{u}}+(1-\beta)^{-1} M \zeta(x),
\end{equation}
\begin{equation}
    x=\bar{\mathrm{x}}+(1-\beta)^{-1}T\zeta(x),
\end{equation}
\begin{equation}
    \zeta \in \mathbb{B}_k.
\end{equation}
\end{subequations}
\end{theorem}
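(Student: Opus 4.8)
The plan is to prove the claim by construction, paralleling the proof of Theorem~\ref{Thrm_viable set_single}; the only conceptual difference is that a contraction argument, rather than an ever-growing generator matrix, is what yields a \emph{time-invariant} set. First I would check that the candidate policy \eqref{eq_control_infinite} is well posed on $\Omega=\mathcal{Z}(\bar{\mathrm{x}},(1-\beta)^{-1}T)$: every $x\in\Omega$ admits some $\zeta\in\mathbb{B}_k$ with $x=\bar{\mathrm{x}}+(1-\beta)^{-1}T\zeta$, and $\mu$ applies the same $\zeta$ to $M$. The map $\zeta\mapsto x$ need not be injective, so $\mu$ should be read either as any fixed measurable selection or, more naturally, as a dynamic controller that propagates $\zeta$ from the realized disturbance, exactly as in disturbance-feedback parameterizations; either way $\mu(\Omega)\subseteq\mathcal{Z}(\bar{\mathrm{u}},(1-\beta)^{-1}M)$, so \eqref{eq:thrm3} gives $\Theta\subseteq U$ and \eqref{eq:thrm2} gives $\Omega\subseteq X$.

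The heart of the argument is robust invariance: $Ax+B\mu(x)+d\in\Omega$ for all $x\in\Omega$, $d\in D$. Substituting \eqref{eq_control_infinite} into \eqref{singlesystem_LTV}, using \eqref{eq_zonotope_affine}--\eqref{eq_zonotope_minkowski} exactly as in the proof of Theorem~\ref{Thrm_viable set_single}, and cancelling centers via $A\bar{\mathrm{x}}+B\bar{\mathrm{u}}+\bar{d}=\bar{\mathrm{x}}$, the point reached from $x=\bar{\mathrm{x}}+(1-\beta)^{-1}T\zeta$ lies in $\bar{\mathrm{x}}\oplus(1-\beta)^{-1}(AT+BM)\mathbb{B}_k\oplus G^d\mathbb{B}_p$. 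The key step is to unpack the (non-block-aligned) matrix identity \eqref{eq:thrm1}: it forces the last $p$ columns of $T$ to equal $G^d$, say $T=[T_1,G^d]$, and then $AT+BM=[E,T_1]$. Consequently the set of all states reachable in one step from $\Omega$ equals $\bar{\mathrm{x}}\oplus(1-\beta)^{-1}T_1\mathbb{B}_{k-p}\oplus(1-\beta)^{-1}E\mathbb{B}_p\oplus G^d\mathbb{B}_p$, whereas $\Omega=\bar{\mathrm{x}}\oplus(1-\beta)^{-1}T_1\mathbb{B}_{k-p}\oplus(1-\beta)^{-1}G^d\mathbb{B}_p$. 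The $T_1$-blocks coincide, so by monotonicity of $\oplus$ it suffices to show $(1-\beta)^{-1}E\mathbb{B}_p\oplus G^d\mathbb{B}_p\subseteq(1-\beta)^{-1}G^d\mathbb{B}_p$, i.e.\ $E\mathbb{B}_p\oplus(1-\beta)G^d\mathbb{B}_p\subseteq G^d\mathbb{B}_p$. This is immediate from \eqref{Econ}, which gives $E\mathbb{B}_p\subseteq\beta G^d\mathbb{B}_p$, together with convexity of $G^d\mathbb{B}_p$ (so $\beta G^d\mathbb{B}_p\oplus(1-\beta)G^d\mathbb{B}_p=G^d\mathbb{B}_p$). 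The factor $(1-\beta)^{-1}=\sum_{i\ge0}\beta^i$ is precisely what absorbs the geometric accumulation, over an infinite horizon, of the residual generators $E$ that each step contributes; this is the intuition underlying the construction and mirrors \cite{Rakovic2007}.

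I expect the main obstacle to be the bookkeeping around \eqref{eq:thrm1}: because $AT+BM$ has $k$ columns while $E$ has $p$, the identity is not literally blockwise, and one must argue carefully that it is equivalent to ``$T=[T_1,G^d]$ and $AT+BM=[E,T_1]$'' (which incidentally reveals the implicit requirement $k\ge p$), and then track how the scalar $(1-\beta)^{-1}$ distributes over the individual generator blocks under Minkowski sums. Once that is in place, everything else is routine: the containments $\Omega\subseteq X$ and $\Theta\subseteq U$ come straight from \eqref{eq:thrm2}--\eqref{eq:thrm3}, $\Omega$ meets Definition~\ref{def:RCI}, and hence the infinite-horizon contract with assumption $D$ and guarantee $(X,U)$ (or any valid tightening thereof) is satisfiable in the sense of Definition~\ref{definition_satsifiablity}, with $\Omega$, $\Theta$, and \eqref{eq_control_infinite} as the witnesses.
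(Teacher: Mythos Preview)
Your argument is correct and arrives at the same conclusion, but the key invariance step is handled differently from the paper. The paper first notes from \eqref{eq:thrm1}--\eqref{Econ} that $(AT+BM)\mathbb{B}_k\oplus G^d\mathbb{B}_p\subseteq T\mathbb{B}_k\oplus\beta G^d\mathbb{B}_p$, then \emph{subtracts} $\mathcal{Z}(0,\beta G^d)$ from both sides in the Pontryagin-difference sense (appealing to support-function properties of convex polytopes) to obtain $(AT+BM)\mathbb{B}_k\oplus(1-\beta)G^d\mathbb{B}_p\subseteq T\mathbb{B}_k$, and finally scales by $(1-\beta)^{-1}$. You instead unpack \eqref{eq:thrm1} columnwise into $T=[T_1,G^d]$ and $AT+BM=[E,T_1]$, cancel the common $(1-\beta)^{-1}T_1$-block via monotonicity of $\oplus$, and reduce invariance to the elementary inclusion $E\mathbb{B}_p\oplus(1-\beta)G^d\mathbb{B}_p\subseteq G^d\mathbb{B}_p$, which follows immediately from \eqref{Econ} together with the convex identity $\beta S\oplus(1-\beta)S=S$. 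Your route is more self-contained---no Pontryagin difference or support-function machinery is invoked---and it makes explicit the structural requirement $k\ge p$ that the non-block-aligned identity \eqref{eq:thrm1} entails; the paper's route is more compact but leans on a set-subtraction step whose justification it defers to a citation.
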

\begin{proof}
Substituting policy \eqref{eq_control_infinite} in \eqref{singlesystem_LTV}, we obtain relation \eqref{eq:thrm1}. In order to prove invariance, we observe that:
\begin{equation}
    (AT+BM)\mathbb{B}_k \oplus \mathcal{Z}(0,G^d) \subseteq T\mathbb{B}_k \oplus \mathcal{Z}(0,\beta G^d), 
\end{equation}
and
\begin{equation}
    A\bar{\mathrm{x}} + B \bar{\mathrm{u}} + \bar{d} = \bar{\mathrm{x}}.
\end{equation}
We subtract $\mathcal{Z}(0,\beta G^d)$ from both sides in the Pontryagin difference sense \cite{kolmanovsky1998theory}. This is a valid operation when both sides are convex polytopes. We omit the proof as it is based on the properties of support functions \cite{rockafellar2015convex} of convex sets. We arrive at: 
\begin{equation}
\label{eq_middle_sadra}
    (AT+BM)\mathbb{B}_k \oplus (1-\beta) \mathcal{Z}(0,G^d) \subseteq T\mathbb{B}_k. 
\end{equation}
By multiplying both sides of \eqref{eq_middle_sadra} by $\dfrac{1}{1-\beta}$, we reach the conclusion that $ \mathcal{Z}(\bar{\mathrm{x}},(1-\beta)^{-1}T)$ is a RCI set with $\Theta=\mathcal{Z}(\bar{\mathrm{u}},(1-\beta)^{-1}M)$, and the proof is complete.
\end{proof}

Similar to \eqref{singleviablecon}, there exists a sufficient linear encoding for \eqref{singlercicon} to find the viable sets and action sets for fixed $k$ and $\beta$. The feasibility of the linear program implies the satisfiability of the contract.  

\begin{remark} \label{rmrk_RCI}
We can simplify Theorem \ref{thrm_single_RCI} by assuming $E = 0_{n \times p}$ and $\beta =0$. As a result, there is no need for constraint (\ref{Econ}). However, this assumption increases conservativeness and may lead to infeasibility.
\end{remark}

Note that, in both Theorem \ref{Thrm_viable set_single} and Theorem \ref{thrm_single_RCI}, the structures of matrices $T_t$ and $M_t$ (or $T$ and $M$) depend on a constant $k$ that acts like a hyper-parameter for our proposed linear program. Therefore, changing $k$ leads to a different result. This allows iterations over different $k$ to find a feasible solution for $T_t$ and $M_t$ (or $T$ and $M$). However, finding a feasible solution with the smallest possible value for $k$ is preferable, since it decreases the number of variables in the optimization problem.

\section{Composition of Parametric Assume-Guarantee Contracts}
\label{sec_AGR}
In this section, we focus on the network of coupled systems \eqref{subsystems_LTV} and provide the first steps for the solutions to Problems \ref{problem_viable} and \ref{Problem_RCI}. First, we decouple the subsystems by viewing the coupling effects of other subsystems as disturbances, which are called augmented disturbances $d_{i,t}^{aug}$ and are equal to:
\begin{equation} \label{disturbance_coupling_effect}
    d_{i,t}^{aug} := \sum_{j \ne i} A_{ij,t}x_{j,t} + \sum_{j \ne i} B_{ij,t}u_{j,t}+ d_{i,t},
\end{equation}
where $i$ is the subsystem's index and $t$ is the time step. Next, for each subsystem $i\in \mathcal{I}$, we define one AG contract denoted by $\mathcal{C}_i$. The collection of these contracts forms a set of AG contracts $\mathcal{C} = \{ \mathcal{C}_i|i\in \mathcal{I}\}$ for a network of coupled subsystems. From \eqref{disturbance_coupling_effect}, it can be seen that, for subsystem $i$, the assumption set over the disturbance space, which is denoted by $W_{i,t}$, can be defined as a function of the guarantees of other subsystems $(\mathcal{X}_{j,t},\mathcal{U}_{j,t}) (j \neq i)$ as follows:
\begin{equation}\label{eq_d_aug}
    W_{i,t}  := \bigoplus_{j \ne i} A_{ij,t}\mathcal{X}_{j,t} \oplus \bigoplus_{j \ne i} B_{ij,t}\mathcal{U}_{j,t} \oplus D_{i,t}.
\end{equation}
However, unlike the single-system case, in which by using Theorem \ref{Thrm_viable set_single} (or Theorem \ref{thrm_single_RCI} for the LTI case), we could directly compute the guarantee for a given disturbance set, in the case of interconnected subsystems, finding a set of satisfiable contracts is not straightforward. This is due to the fact that the guarantee of one subsystem affects the assumptions of other subsystems as a result of treating the coupling effects as disturbance. This is known as the circularity problem of AG contracts.

Also, it is worth noting that the contracts are common knowledge among all subsystems because each subsystem has to be able to compute its assumed disturbance using \eqref{eq_d_aug}. This, in fact, helps in the compositional computation of the synthesis problem.

Next, we introduce a correctness criterion to break the circularity of AG contracts and parametric contracts to search over contract sets to find a set of satisfiable contracts. Also, we propose a potential function as a quantitative indicator of how far a set of given contracts is from correct composition. Finally, we introduce a set of sufficient constraints for the validity of parametric contracts.

\subsection{Composition Correctness}
From \eqref{disturbance_coupling_effect}, it can be shown that after finding the viable sets $\Omega_{i,t}$ and action sets $ \Theta_{i,t}$ for all $i\in \mathcal{I}$, the actual augmented disturbance set for subsystem $i$ at time $t$, denoted by $D_{i,t}^{aug}$, can be obtained as:
\begin{equation}\label{actual_disturbance}
    D^{aug}_{i,t} := \bigoplus_{j \ne i} A_{ij,t}\Omega_{j,t} \oplus \bigoplus_{j \ne i} B_{ij,t}\Theta_{j,t} \oplus D_{i,t}.
\end{equation}
If the actual experienced disturbance set for every subsystem becomes a subset of its assumed disturbance set, the circularity of AG contracts is broken. We say that the {\em composition of the contracts is correct}. The formal definition of correctness is given as follows:
\begin{definition}[Composition Correctness]
Consider a set of locally valid assume-guarantee contracts $\mathcal{C}_i = (\mathcal{A}_i,\mathcal{G}_i)$, $i \in \mathcal{I}$. The composition is correct if the following relation holds:
\begin{equation}\label{composition_correctness_criterion}
D_{i,t}^{aug} \subseteq W_{i,t}   , \forall i \in \mathcal{I} , \forall t \in \mathbb{N}_{h-1} 
\end{equation}
\end{definition}
However, in order to make compositional computations easier, relation \eqref{composition_correctness_criterion} is replaced with the following more conservative version:
\begin{equation}\label{composition_correctness_criterion_step2}
    \Omega_{i,t} \subseteq \mathcal{X}_{i,t} , \Theta_{i,t} \subseteq \mathcal{U}_{i,t} , \forall i \in \mathcal{I} , \forall t \in \mathbb{N}_{h-1},
\end{equation}
When \eqref{composition_correctness_criterion_step2} holds, from \eqref{eq_d_aug} and \eqref{actual_disturbance}, it is evident that \eqref{composition_correctness_criterion} holds as well.
Our characterization of correctness is Boolean, i.e., it indicates whether or not the contract composition is correct. Nevertheless, having a quantitative measure that describes how far a set of given contracts is from correct composition is desirable. Such measurement can be very helpful in directing us toward correct contract composition. The following ``potential function'' does exactly that by assigning a score to a set of contracts.

\begin{definition}[Potential Function]
Given a set of contracts $\mathcal{C} = \{\mathcal{C}_i |  i \in \mathcal{I}\}$, its potential function is
\begin{equation}
    \mathcal{V}(\mathcal{C}) = \sum_{i \in \mathcal{I}} \mathcal{V}_i(\mathcal{C}),
\end{equation}
where $\mathcal{V}_i(\mathcal{C})$ is defined as follows:
\begin{equation} \label{directed_iasdorf_distance_x_u_finite_time}
    \mathcal{V}_i(\mathcal{C}) := \sum_{t\in \mathbb{N}_{h}} d_{DH} ( \mathcal{X}_{i,t} , \Omega_{i,t} ) +  \sum_{t\in \mathbb{N}_{h-1}} d_{DH} ( \mathcal{U}_{i,t} , \Theta_{i,t} ).
\end{equation}
\end{definition}

The potential function is equal to the sum of the directed Hausdorff distances between the sets in \eqref{composition_correctness_criterion_step2}. Given that \eqref{composition_correctness_criterion_step2} leads to \eqref{composition_correctness_criterion} and the potential function is zero if and only if \eqref{composition_correctness_criterion_step2} holds, it can be inferred that when the potential function is zero, the composition of the contracts is correct, implying that each system anticipates a larger set of disturbance than it will actually experience.  Thus, this quantitative characterization of correct composition is sound. Fig \ref{potential_function_x_fig} shows an example of the sets in the state space of subsystem $i$ at a given time $t$. A similar diagram can be drawn for the control space to show the other component of the potential function.

\begin{figure} 
\begin{center}
\includegraphics[scale=0.7]{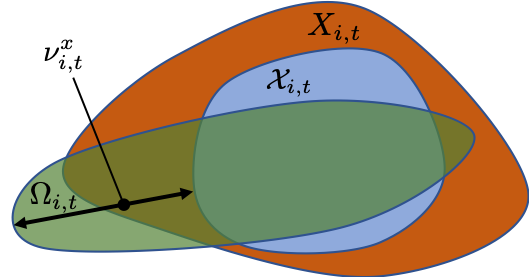}    
\caption{The admissible set $X_{i,t}$ (red), guarantee set in the state space $\mathcal{X}_{i,t}$ (blue), and the viable set $\Omega_{i,t}$ (green) for a 2-dimensional example subsystem $i$ at time $t$. The first component of the potential function in \eqref{directed_iasdorf_distance_x_u_finite_time} at time step $t$ (i.e. $d_{DH} ( \mathcal{X}_{i,t} , \Omega_{i,t} )$) is denoted by $\nu_{i,t}^x$. If $\nu_{i,t}^x=0$, then $\Omega_{i,t} \subseteq \mathcal{X}_{i,t}$. Our goal is to find a set of contracts such that $ \Omega_{i,t} \subseteq \mathcal{X}_{i,t} \subseteq X_{i,t}$ for all $i,t$.}  
\label{potential_function_x_fig}                                 
\end{center}                                 
\end{figure}

\subsection{Parametric Contracts}
To be able to search over the contract sets, we introduce parametric sets. The baseline sets are zonotopes $\mathcal{Z}(\bar{c}^x_{i,t} , C^x_{i,t})$ and $  \mathcal{Z}(\bar{c}^u_{i,t} , C^u_{i,t})$,
for all $i \in \mathcal{I}$ and $t \in \mathbb{N}_h$, where $\bar{c}^x_{i,t} \in \mathbb{R}^{n_i}$, $\bar{c}^u_{i,t} \in \mathbb{R}^{m_i}$, and $C^x_{i,t} \in \mathbb{R}^{n_i \times \zeta^x_{i,t}}$, $C^u_{i,t} \in \mathbb{R}^{m_i \times \zeta^u_{i,t}}$. The baseline sets need to be defined by the user in advance. One option is to select them as the viable sets and action sets for each subsystem, respectively, while ignoring the couplings to other subsystems. The parametric sets are denoted by $\mathcal{X}_{i,t}(\alpha^x_{i,t}) $ and $\mathcal{U}_{i,t}(\alpha^u_{i,t}) $, and are defined as follows:
\begin{subequations} \label{parameterized_sets}
\begin{equation}
    \mathcal{X}_{i,t}(\alpha^x_{i,t}) := \mathcal{Z}(\bar{c}^x_{i,t} , C^x_{i,t} \diag(\alpha^x_{i,t})),
\end{equation}
\begin{equation}
    \mathcal{U}_{i,t}(\alpha^u_{i,t}) := \mathcal{Z}(\bar{c}^u_{i,t} , C^u_{i,t} \diag(\alpha^u_{i,t})),
\end{equation}
\end{subequations}
where parameters $\alpha^x_{i,t} \in \mathbb{R}_{+}^{\zeta^x_{i,t}}$ and $\alpha^u_{i,t} \in \mathbb{R}_{+}^{\zeta^u_{i,t}}$ are vectors with non-negative real entries that scale the generators' columns of the baseline sets. Also, the set of all parameters is denoted by $\alpha=\{\alpha^x_{i,t},\alpha^u_{i,t}\}_{\forall i , t}$. Finally, a parametric contract $\mathcal{C}_i$ has its guarantee in the form of \eqref{parameterized_sets} in the state and control space, respectively, and its assumption can be directly derived by replacing \eqref{parameterized_sets} in \eqref{eq_d_aug}, which is shown by $W_{i,t}(\alpha)$.
\subsection{Parametric Potential Function} \label{parametric_potential_function_section}
Here we introduce parametric potential functions, which quantify 
how close a set of contracts/parameters is from composition correctness when the contracts are given in parametric form.
\begin{definition}[Parametric Potential Function]
Given a set of parameters $\alpha$, the parametric potential function is defined as:
\begin{equation}
\label{eq_v_alpha}
    \mathcal{V}(\alpha) = \sum_{i \in \mathcal{I}} \mathcal{V}_i(\alpha),
\end{equation}
where
\begin{equation}\label{eq_v_i_alpha}
\begin{array}{ll}
        \mathcal{V}_i(\alpha) := &  \displaystyle \sum_{t\in \mathbb{N}_{h}}  d_{DH}(\mathcal{X}_{i,t}(\alpha^x_{i,t}) ,  \Omega_{i,t} )+ \\ &   \displaystyle \sum_{t\in \mathbb{N}_{h-1}}  d_{DH}(\mathcal{U}_{i,t}(\alpha^u_{i,t}) ,  \Theta_{i,t} ).
\end{array}
\end{equation}
\end{definition}
Note that the parametric potential function is zero if and only if
\begin{equation} \label{corectness_prime}
    \Omega_{i,t} \subseteq \mathcal{X}_{i,t}(\alpha^x_{i,t}) \quad\text{and}\quad \Theta_{i,t} \subseteq \mathcal{U}_{i,t}(\alpha^u_{i,t}).
\end{equation}
Thus the level set of the potential function at zero equals the set of parameters that correspond to the correct contract composition. These parameters are referred to as \textit{correct parameters}.
Using the linear program \eqref{directed_hausdorff_distance_computation}, the optimization problem for computing $\mathcal{V}_i(\alpha)$ is as follows: 
\begin{subequations} \label{LTV_guarantees}
\begin{align}
    & \mathcal{V}_i(\alpha) = \min_{\bar{\mathrm{x}}^i,T^i,\bar{\mathrm{u}}^i,M^i , d_t^x , d_t^u}
    \begin{aligned}[t]
       & \sum_{t \in \mathbb{N}_{h}} d^x_t + \sum_{t \in \mathbb{N}_{h-1}} d^u_t 
    \end{aligned} \notag \\
    &\text{subject to} \notag \\
    & [A_{ii,t}T^i_t+ B_{ii,t} M^i_t , \mathcal{W}_{i,t}^{aug} ] =  T^i_{t+1}, \quad \forall t \in \mathbb{N}_{h-1} \label{eq:a} \\
    & A_{ii,t}\bar{\mathrm{x}}^i_t + B_{ii,t} \bar{\mathrm{u}}^i_t + \bar{d}_{i,t}^{aug} = \bar{\mathrm{x}}^i_{t+1},  \quad  \forall t \in \mathbb{N}_{h-1} \label{eq:b}\\
    & \mathcal{Z}(\bar{\mathrm{x}}^i_t,T^i_t) \subseteq \ \mathcal{X}_{i,t}(\alpha^x_{i,t}) \oplus \mathcal{Z}(0_{n_i},d^x_t I_{n_i}) ,   \forall t \in \mathbb{N}_{h} \label{eq:d}\\
    & \mathcal{Z}(\bar{\mathrm{u}}^i_t,M^i_t) \subseteq \ \mathcal{U}_i(t,\alpha^u_{i,t}) \oplus \mathcal{Z}(0_{m_i},d^u_t I_{m_i}) , \quad \forall t \in \mathbb{N}_{h-1} \label{eq:e}\\
    & d_t^x\geq 0 , \hspace{2 mm} \forall t \in \mathbb{N}_{h}, \label{eq:hausdorffdistance_x}\\
    & d_t^u \geq 0 , \hspace{2 mm} \forall t \in \mathbb{N}_{h-1}, \label{eq:hausdorffdistance_u}
\end{align}
\end{subequations}
where $\bar{\mathrm{x}}^i,T^i,\bar{\mathrm{u}}^i$, and $M^i$ are sets containing all $\bar{\mathrm{x}}^i_t,T^i_t,\bar{\mathrm{u}}^i_t$, and $M^i_t$, respectively. Constraints (\ref{eq:a}) and (\ref{eq:b}) come from Theorem \ref{Thrm_viable set_single}, which enforce viability conditions (must be replaced by the constraints in Theorem \ref{thrm_single_RCI} in the LTI case), where $\bar{d}_{i,t}^{aug}$ and $\mathcal{W}_{i,t}^{aug}$ are the center and generator of the assumed disturbance set for the subsystem $i$, respectively:
\begin{equation}\label{reduced_dist_set}
\mathcal{Z}( \bar{d}_{i,t}^{aug} , \mathcal{W}_{i,t}^{aug})  := \reduce( W_{i,t}(\alpha) ),\forall t \in \mathbb{N}_{h-1}.
\end{equation} 
Since the assumed disturbance set for each subsystem is known for a given $\alpha$, zonotope order reduction methods may be used to over-approximate the disturbance set in order to reduce computational complexity. Also, scalar variables $d^x_t$ and $d^u_t$, constraints (\ref{eq:d})-(\ref{eq:hausdorffdistance_u}) and the objective function originate from Lemma \ref{hasdurff_distance_computation} in order to compute the sum of Directed Hausdorff distances over all time steps on both state and control spaces. 
The original problem of computing all of the directed Hausdorff distances is a multi-objective optimization problem, i.e. one objective function for each Directed Hausdorff distance computation. However, since the optimal points of the optimization problems are not conflicting with each other, we may add the objective functions and ensure that the outcome is the same as when each optimization problem is solved independently. Note that constraints \eqref{eq:d} and \eqref{eq:e} can be encoded into linear constraints using Lemma \ref{lemma_weighted}.
The following theorem is the main result of this section:

\begin{theorem} \textbf{(Convexity of parametric potential function)} Using our parameterization \eqref{parameterized_sets} and the linear encoding with containment introduced in Lemma \ref{sadra_zon_containment} and Lemma \ref{lemma_weighted}, the parametric potential function $\mathcal{V}(\alpha)$ is a convex piecewise affine function. The set of correct parameters, which is the level set of $\mathcal{V}(\alpha)$ at zero, is also a convex set.
\end{theorem}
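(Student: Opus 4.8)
The plan is to show that for each subsystem $i\in\mathcal{I}$ the map $\alpha\mapsto\mathcal{V}_i(\alpha)$ defined by program \eqref{LTV_guarantees} is the optimal value of a linear program with fixed data in which $\alpha$ enters affinely and only in the right-hand sides of the constraints; the optimal value of such a parametric LP is convex and piecewise affine on its (polyhedral) effective domain, and since $\mathcal{V}(\alpha)=\sum_{i\in\mathcal{I}}\mathcal{V}_i(\alpha)$ is a sum of such functions, $\mathcal{V}$ is convex and piecewise affine as well. Convexity of the set of correct parameters will then follow because $\mathcal{V}\ge0$ everywhere, so its zero level set coincides with its zero sublevel set.

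The first step is to trace how $\alpha$ enters the data of \eqref{LTV_guarantees}. By the parameterization \eqref{parameterized_sets} and the zonotope identities \eqref{eq_zonotope_affine}--\eqref{eq_zonotope_minkowski}, the assumed-disturbance zonotope $W_{i,t}(\alpha)$ from \eqref{eq_d_aug} has center $\sum_{j\ne i}\big(A_{ij,t}\bar{c}^x_{j,t}+B_{ij,t}\bar{c}^u_{j,t}\big)+\bar{d}_{i,t}$, which does not depend on $\alpha$, and generator matrix obtained by horizontally stacking the columns of $A_{ij,t}C^x_{j,t}$ and $B_{ij,t}C^u_{j,t}$ --- each scaled by a single entry of $\alpha^x_{j,t}$ or $\alpha^u_{j,t}$ --- together with $G^d_{i,t}$, hence linear in $\alpha$; in particular $\mathcal{V}_i$ depends on the whole vector $\alpha$, not only on $\alpha^x_{i,t},\alpha^u_{i,t}$, because \eqref{eq:a} involves the neighbors' guarantees through $W_{i,t}(\alpha)$. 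In forming $\mathcal{Z}(\bar{d}^{aug}_{i,t},\mathcal{W}^{aug}_{i,t})=\reduce(W_{i,t}(\alpha))$, the sole nonlinearity is the columnwise absolute value in \eqref{reduction}; but since $\alpha\ge0$ we have $|\alpha_\ell v|=\alpha_\ell\,|v|$ for any column $v$, so $\mathcal{W}^{aug}_{i,t}=\diag(\cdot)$ with the diagonal an affine function of $\alpha$ with nonnegative coefficients, while $\bar{d}^{aug}_{i,t}$ stays constant. Hence \eqref{eq:a} and \eqref{eq:b} are affine in $(\bar{\mathrm{x}}^i,T^i,\bar{\mathrm{u}}^i,M^i,\alpha)$, and can be rearranged so that the $\alpha$-dependence sits on the right-hand side.

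Next I would encode the relaxed containments \eqref{eq:d}--\eqref{eq:e} via Lemma \ref{lemma_weighted}. By \eqref{eq_zonotope_minkowski}, $\mathcal{X}_{i,t}(\alpha^x_{i,t})\oplus\mathcal{Z}(0_{n_i},d^x_t I_{n_i})$ equals $\mathcal{Z}\big(\bar{c}^x_{i,t},\,[C^x_{i,t},I_{n_i}]\diag([\alpha^x_{i,t};\,d^x_t\mathbb{1}_{n_i}])\big)$, so Lemma \ref{lemma_weighted}, applied with the \emph{fixed} generator matrix $[C^x_{i,t},I_{n_i}]$ and weight vector $[\alpha^x_{i,t};\,d^x_t\mathbb{1}_{n_i}]$, replaces \eqref{eq:d} with $T^i_t=[C^x_{i,t},I_{n_i}]\Gamma$, $\bar{c}^x_{i,t}-\bar{\mathrm{x}}^i_t=[C^x_{i,t},I_{n_i}]\gamma$, and the inequality \eqref{con_zon_addmod} whose right-hand side is $[\alpha^x_{i,t};\,d^x_t\mathbb{1}_{n_i}]$, with $\Gamma,\gamma$ fresh variables and \eqref{con_zon_addmod} made linear by the usual componentwise bounds for $|\Gamma|$ and $|\gamma|$; throughout, $\alpha$ occurs only additively on a right-hand side. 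Treating \eqref{eq:e} the same way and stacking all policy variables, the slacks $d^x_t,d^u_t$, and the encoding variables into a single vector $z$, program \eqref{LTV_guarantees} reads $\mathcal{V}_i(\alpha)=\min\{c^T z:\ \hat{A} z=\hat{b}+\hat{E}\alpha,\ \hat{C} z\le\hat{f}+\hat{F}\alpha\}$ with $c,\hat{A},\hat{C},\hat{b},\hat{f},\hat{E},\hat{F}$ independent of $\alpha$; moreover this LP is feasible for every $\alpha\ge0$ (make the slacks large and take $\Gamma,\gamma$ supported on the appended identity blocks) and its objective is nonnegative, so $\mathcal{V}_i(\alpha)\in[0,\infty)$ for all admissible $\alpha$.

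It remains to apply the standard fact that the optimal value of an LP whose parameter appears affinely on the right-hand side is convex and piecewise affine; concretely, $\mathrm{epi}(\mathcal{V}_i)=\{(\alpha,r):\alpha\ge0,\ \exists z,\ \hat{A} z=\hat{b}+\hat{E}\alpha,\ \hat{C} z\le\hat{f}+\hat{F}\alpha,\ c^T z\le r\}$ is the projection of a polyhedron, hence itself a polyhedron, so $\mathcal{V}_i$ --- and therefore $\mathcal{V}=\sum_{i}\mathcal{V}_i$ --- is convex and piecewise affine. Since each $\mathcal{V}_i\ge0$, the set of correct parameters is $\{\alpha\ge0:\mathcal{V}(\alpha)=0\}=\{\alpha\ge0:\mathcal{V}(\alpha)\le0\}$, a sublevel set of a convex function and therefore convex; in fact it is the projection onto the $\alpha$-coordinates of the polyhedron obtained by additionally fixing $d^x_t=d^u_t=0$ in every subsystem's constraints, so it is a polyhedron. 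The infinite-horizon (RCI) case is handled analogously, with \eqref{eq:a}--\eqref{eq:b} replaced by the encodings behind Theorem \ref{thrm_single_RCI} for fixed $k$ and $\beta$ --- note that \eqref{Econ} is linear in $E$ by Lemma \ref{sadra_zon_containment} and the factors $(1-\beta)^{-1}$ are constants. The delicate point is the bookkeeping: one must verify that none of the operations used to assemble the constraint system --- the linear maps and Minkowski sums, the order reduction $\reduce$, and the Lemma \ref{sadra_zon_containment}/Lemma \ref{lemma_weighted} containment encodings --- lets $\alpha$ enter a coefficient matrix or creates a non-affine dependence, and the restriction $\alpha\ge0$ is exactly what makes this work, via $|\alpha_\ell v|=\alpha_\ell|v|$.
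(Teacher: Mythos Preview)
Your proposal is correct and follows essentially the same approach as the paper: each $\mathcal{V}_i(\alpha)$ is the optimal value of a linear program in which $\alpha$ enters affinely on the right-hand side, hence is convex and piecewise affine, and the sum inherits these properties. Your argument is in fact more careful than the paper's on one point---the paper invokes ``the level set of a convex function is a convex set,'' which is not true in general, whereas you correctly observe that nonnegativity of $\mathcal{V}$ makes the zero level set coincide with the zero \emph{sublevel} set.
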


\begin{proof}
As shown in (\ref{LTV_guarantees}), each $\mathcal{V}_i(\alpha)$ is formulated in a linear program, implying that each $\mathcal{V}_i(\alpha)$ is a convex piecewise affine function \cite{bertsimas1997introduction}. Also, because the summation of convex piecewise affine functions remains convex and piecewise affine, $\mathcal{V}(\alpha)$ is also a convex piecewise affine function. Since the level set of a convex function is a convex set, the set of correct parameters is a convex set.
\end{proof}

\subsection{Validity}\label{composition_validity}

A set of desirable contracts needs to be \emph{compositionally correct} and individually \emph{satisfiable} and \emph{valid}, as indicated in Section \ref{sec_AG}. The set of parameters that results in composition correctness and satisfiability does not necessarily lead to validity as well. In other words, in Fig. \ref{potential_function_x_fig}, it may be possible that $\Omega_{i,t} \subseteq \mathcal{X}_{i,t}$ but $\Omega_{i,t} \not\subseteq X_{i,t}$. In this subsection, the focus is on finding the set of valid parameters, which are the set of parameters that corresponds to a valid contract for each subsystem. A contract is valid if and only if:
\begin{subequations}\label{valid_parameter_set}
\begin{equation}
    \mathcal{X}_{i,t}(\alpha^x_{i,t}) \subseteq X_{i,t}, \forall t \in \mathbb{N}_h,
\end{equation}
\begin{equation}
    \mathcal{U}_{i,t}(\alpha^u_{i,t}) \subseteq U_{i,t},\forall t \in \mathbb{N}_{h-1}.
\end{equation}
\end{subequations}
\begin{theorem}[Convexity of valid parameters set]
Using parameterization \eqref{parameterized_sets} and the linear encoding with containment introduced in Lemma \ref{lemma_weighted}, the set of valid parameters derived from \eqref{valid_parameter_set} is a convex set.
\end{theorem}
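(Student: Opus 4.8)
The plan is to prove convexity by \emph{lifting}: encode each zonotope inclusion appearing in \eqref{valid_parameter_set} via the linear zonotope-containment encoding of Lemma \ref{sadra_zon_containment} (and its weighted form, Lemma \ref{lemma_weighted}), observe that the resulting constraint system is jointly convex in the parameters $\alpha$ together with the containment-certificate variables, and then note that the valid parameter set is a coordinate projection of this convex lifted set, hence convex.

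Concretely, I would first write out \eqref{valid_parameter_set} under the parameterization \eqref{parameterized_sets}. Since, by the set-decomposition Subproblem \ref{subproblem}, each $X_{i,t}$ and $U_{i,t}$ is a zonotope, say $X_{i,t}=\mathcal{Z}(c^x_{i,t},G^x_{i,t})$ and $U_{i,t}=\mathcal{Z}(c^u_{i,t},G^u_{i,t})$, validity is
\begin{align*}
\mathcal{Z}\bigl(\bar c^x_{i,t},\,C^x_{i,t}\diag(\alpha^x_{i,t})\bigr)&\subseteq \mathcal{Z}(c^x_{i,t},G^x_{i,t}), \\
\mathcal{Z}\bigl(\bar c^u_{i,t},\,C^u_{i,t}\diag(\alpha^u_{i,t})\bigr)&\subseteq \mathcal{Z}(c^u_{i,t},G^u_{i,t}),
\end{align*}
for all $i\in\mathcal{I}$ and the relevant $t$. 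Applying the certificates \eqref{zon_containment} to each inclusion introduces, for every $(i,t)$, a matrix $\Gamma^x_{i,t}$ and vector $\gamma^x_{i,t}$ (and control counterparts $\Gamma^u_{i,t},\gamma^u_{i,t}$) subject to $C^x_{i,t}\diag(\alpha^x_{i,t})=G^x_{i,t}\Gamma^x_{i,t}$, $c^x_{i,t}-\bar c^x_{i,t}=G^x_{i,t}\gamma^x_{i,t}$, and $\bigl\|[\Gamma^x_{i,t},\gamma^x_{i,t}]\bigr\|_\infty\le 1$ (constraint \eqref{con_zon_containment}), together with the standing requirement $\alpha^x_{i,t}\ge 0$, and analogously for the control inclusions.

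The crux --- and the only place needing care --- is to recognize that this whole system is jointly \emph{linear} in its equalities and convex in its inequalities over the stacked decision vector $\bigl(\alpha,\{\Gamma^x_{i,t},\gamma^x_{i,t},\Gamma^u_{i,t},\gamma^u_{i,t}\}\bigr)$. In particular $C^x_{i,t}\diag(\alpha^x_{i,t})=G^x_{i,t}\Gamma^x_{i,t}$ is \emph{not} bilinear: $C^x_{i,t}$ and $G^x_{i,t}$ are fixed data, $\diag(\cdot)$ is linear in $\alpha^x_{i,t}$, and $\Gamma^x_{i,t}$ enters linearly, so no product of two unknowns occurs; the center equalities are affine in the $\gamma$'s alone, and $\|\cdot\|_\infty\le1$ together with $\alpha\ge0$ cut out a polyhedron. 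Hence the lifted feasible set $\mathcal{F}$ is a polyhedron, in particular convex, and the valid parameter set of \eqref{valid_parameter_set} is exactly the image of $\mathcal{F}$ under the projection $\bigl(\alpha,\{\Gamma,\gamma\}\bigr)\mapsto\alpha$; a linear image of a convex set is convex, which proves the claim. If one additionally wants the \emph{exact} valid set (prior to any sufficient-condition encoding) to be convex, I would append the coordinate-free remark that $\mathcal{X}_{i,t}(\lambda\alpha_1+(1-\lambda)\alpha_2)=\lambda\,\mathcal{X}_{i,t}(\alpha_1)\oplus(1-\lambda)\,\mathcal{X}_{i,t}(\alpha_2)$ --- which follows from $\diag$ being linear, $\alpha\ge0$, and \eqref{eq_zonotope_affine}--\eqref{eq_zonotope_minkowski} --- so that $\mathcal{X}_{i,t}(\alpha_1),\mathcal{X}_{i,t}(\alpha_2)\subseteq X_{i,t}$ and convexity of $X_{i,t}$ give $\mathcal{X}_{i,t}(\lambda\alpha_1+(1-\lambda)\alpha_2)\subseteq X_{i,t}$. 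There is no deep obstacle here: the work is entirely in the linearity / Minkowski-linearity bookkeeping, once the scaling is seen to act only on fixed generator data.
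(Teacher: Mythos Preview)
Your proposal is correct and its core argument --- encode \eqref{valid_parameter_set} via the linear zonotope-containment certificates, obtain a polyhedral lifted feasible set, then project --- is exactly the paper's (very terse) proof, just spelled out with the projection step made explicit. Your appended Minkowski-linearity remark, showing directly that $\mathcal{X}_{i,t}(\lambda\alpha_1+(1-\lambda)\alpha_2)=\lambda\,\mathcal{X}_{i,t}(\alpha_1)\oplus(1-\lambda)\,\mathcal{X}_{i,t}(\alpha_2)\subseteq X_{i,t}$, is a genuinely different and encoding-free route that the paper does not give; it buys you convexity of the \emph{exact} valid set rather than of its sufficient linear relaxation, at the modest cost of needing $\alpha\ge 0$ and the interval identity $[-a,a]\oplus[-b,b]=[-(a{+}b),a{+}b]$ along each shared generator direction.
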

\begin{proof}
The set of valid parameters is the feasible region of the linear encoding of \eqref{valid_parameter_set}. Therefore, it is a convex set \cite{bertsimas1997introduction}.
\end{proof}
Notably, in a centralized method, we may avoid decoupling constraints via the solution for Subproblem \ref{subproblem} and instead impose them by $\prod_{i \in \mathcal{I}}\mathcal{X}_{i,t}(\alpha^x_{i,t}) \subseteq X_t, \forall t \in \mathbb{N}_h$ and $\prod_{i \in \mathcal{I}}\mathcal{U}_{i,t}(\alpha^u_{i,t}) \subseteq U_t,\forall t \in \mathbb{N}_{h-1}$.

\section{Compositional Synthesis and Computations}
\label{sec_gradient}
Two approaches based on parametric assume-guarantee contracts are proposed in this section to solve Problems \ref{problem_viable} and \ref{Problem_RCI}. The first provides a single centralized optimization to identify a set of desirable contracts, decentralized viable sets, and decentralized controllers. The second method uses the notion of parametric potential function to achieve the same task in a compositional fashion.
\subsection{Single Convex Program} \label{centralized_approach}

By combining all of the encodings presented so far, we obtain the following centralized linear program:
\begin{subequations} \label{LTV_optmz}
\begin{align}
    & \Omega , \Theta = \underset{\bar{\mathrm{x}}^i_t,T^i_t,\bar{\mathrm{u}}^i_t,M^i_t,\alpha , \mathrm{d}_i , \mathrm{G}_i}{\text{argmin}}
    \begin{aligned}[t]
       & \sum_{t \in \mathbb{N}_{h-1}}\sum_{i\in \mathcal{I}}{\text{sum}(\alpha_{i,t}^x)}
    \end{aligned} \notag \\
    &\text{subject to} \notag \\
    & [A_{ii,t}T^i_t+ B_{ii,t} M^i_t , \mathrm{G}_{i,t} ] =  T^i_{t+1} ,    \forall t\in \mathbb{N}_{h-1} , \forall i \in \mathcal{I}  \label{eq2:a} \\
    & A_{ii,t}\bar{\mathrm{x}}^i_t + B_{ii,t} \bar{\mathrm{u}}^i_t + \mathrm{d}_{i,t} = \bar{\mathrm{x}}^i_{t+1},  \forall t\in \mathbb{N}_{h-1} ,  \forall i \in \mathcal{I} \label{eq2:b}\\
    & \mathcal{Z}( \mathrm{d}_{i,t} , \mathrm{G}_{i,t})  =  W_{i,t}(\alpha),  \forall t \in \mathbb{N}_{h-1} , \forall i \in \mathcal{I} \label{eq2:e} \\
    & \prod_{i\in \mathcal{I}} \mathcal{Z}( \bar{\mathrm{x}}^i_t , T_t^i ) \subseteq X_t , \quad \forall t\in \mathbb{N}_{h}        \label{eq2:c}\\
    & \prod_{i\in \mathcal{I}} \mathcal{Z}( \bar{\mathrm{u}}^i_t , M_t^i ) \subseteq U_t, \quad \forall t\in \mathbb{N}_{h-1}   \label{eq2:d}\\
    & \mathcal{Z}(\bar{\mathrm{x}}_t^i,T^i_t) \subseteq \mathcal{X}_{i,t}(\alpha^x_{i,t}), \quad \forall t\in \mathbb{N}_{h-1} , \forall i \in \mathcal{I}     \label{eq2:f}\\
    & \mathcal{Z}(\bar{\mathrm{u}}^i_t,M^i_t) \subseteq \mathcal{U}_{i,t}(\alpha^u_{i,t}), \quad \forall t\in \mathbb{N}_{h-1} , \forall i \in \mathcal{I}   \label{eq2:g}\\
    & \alpha^x_{i,t},\alpha^u_{i,t} \geq 0,  \hspace{4 mm} \forall t\in \mathbb{N}_{h-1} ,  \forall i \in \mathcal{I} \label{eq2:h},
\end{align}
\end{subequations}
where $\Omega = \{\Omega_i| \Omega_i= \mathcal{Z}(\bar{\mathrm{x}}^i_0,T^i_0) , ... ,  \mathcal{Z}(\bar{\mathrm{x}}^i_h,T^i_h) ,  \forall i \in \mathcal{I} \} $ and $\Theta = \{\Theta_i | \Theta_i = \mathcal{Z}(\bar{\mathrm{u}}^i_0,M^i_0) , ... , \mathcal{Z}(\bar{\mathrm{u}}^i_{h-1},M^i_{h-1}) , \forall i \in \mathcal{I}  \} $ are decentralized viable sets and action sets, respectively. The constraints \eqref{eq2:a} and \eqref{eq2:b} are sufficient constraints for viable sets introduced in Theorem \ref{Thrm_viable set_single} (must be replaced with the constraints in Theorem \ref{thrm_single_RCI} for Problem \ref{Problem_RCI}). The constraints on the state and control input are imposed in \eqref{eq2:c} and \eqref{eq2:d}. They are applied over the aggregated system in order to reduce the conservatism caused by decoupling the coupled constraints. 
The constraint \eqref{eq2:e} calculates the assumed disturbance set from \eqref{eq_d_aug}. We cannot use zonotope order reduction techniques on this formulation since doing so will result in a non-convex encoding. The constraints \eqref{eq2:f} and \eqref{eq2:g} are also the proposed requirements suggested by the correctness criterion \eqref{corectness_prime}. Lemma \ref{lemma_weighted} may be used to encode these two constraints into linear constraints. It is worth noting that there is no need to provide validity constraints for parametric AG contracts because constraints \eqref{eq2:c} and \eqref{eq2:d} suffice.
Furthermore, the objective function is ad-hoc and it may be selected as shown in \eqref{LTV_optmz}, which is a heuristic strategy for decreasing the volume of the viable sets. The constant $k$ (introduced in Theorem \ref{Thrm_viable set_single}) must be established before solving the linear program in \eqref{LTV_optmz}. Our strategy is to start with a small initial $k$ and solve \eqref{LTV_optmz}, then increase it by one unit until feasibility is attained.

Note that this method is sound, because the correctness criterion is enforced in the process by using zonotope containment constraints. As a result, the output is correct-by-construction. Also, as $k$ goes to infinity, our approach is theoretically able to find a set of correct parameters, if one exists.

\subsection{Compositional Approach} \label{compos_method}

The centralized method proposed in the previous section suffers from the curse of dimensionality. Despite the fact that the control is decentralized, solving a single, very large linear program is impractical for large-scale systems. The computational complexity is driven by two main factors: (i) the large number of variables and constraints in the single optimization problem \eqref{LTV_optmz}; (ii) the order of the zonotope $\mathcal{Z}( \bar{d}_{i,t}^{aug} , D_{i,t}^{aug})$ in the constraint \eqref{eq2:e}, which becomes very large when the number of neighboring subsystems is large due to the Minkowski sum computations in \eqref{eq_d_aug}, which makes \eqref{eq2:a} infeasible for thin $T_t^i$ matrices. Also, because of the existing recursivity in \eqref{eq2:a}, the widths of the matrices $T_t^i$ increase with time steps. 

The main contribution of this paper is a compositional method for the computation of the viable sets. In this subsection, we show  that, by using our proposed parameterized sets \eqref{parameterized_sets} and the convex potential function \eqref{eq_v_alpha}, the single optimization problem in the previous subsection can be transformed into a number of iterative small linear optimization problems with convergence guarantees. Also, following \eqref{LTV_guarantees}, we use zonotope order reduction methods to lower the order of the augmented disturbance sets, which substantially speeds up the calculation.

As explained before, when the potential function is zero, the parameters are inside the correct set of parameters and we have a set of compositionally correct cont,racts. Thus, the goal is to determine $\alpha^*$ such that:
\begin{equation} \label{potential_optmz}
\mathcal{V}^* = \underset{\alpha}{\text{min   }} \sum_{i \in \mathcal{I}}\mathcal{V}_i(\alpha) ,
\end{equation}
where $\mathcal{V}_i(\alpha)$ is the potential function that corresponds to subsystem $i$. Each subsystem concurrently computes its own portion of the potential function by \eqref{LTV_guarantees}. When the optimal value of the above optimization problem is greater than zero $(\mathcal{V}^* > 0)$, the optimal parameters are not in the set of correct parameters. In this case, the algorithm raises the hyper-parameter $k$ (defined in Theorem \ref{Thrm_viable set_single}) by one unit and repeats the same procedure, until it either hits a user-defined maximum threshold for $k$ (no solution found) or it finds parameters with zero value of the potential function. We solve the aforementioned optimization problem using gradient descent. We start with a random initial guess for $\alpha$, and then update the parameters using:
\begin{equation} \label{gradient_descent}
    \alpha \leftarrow \alpha - \delta  \sum_{i \in \mathcal{I}}\nabla_{\alpha} \mathcal{V}_i(\alpha),
\end{equation}
where $\delta$ is the step size. We use duality to calculate the gradient of the potential function. The optimal dual variable of a constraint in a linear program is equal to the derivative of the objective function with respect to the right hand side of the corresponding constraint, according to \textit{sensitivity analysis} of linear programs \cite{bertsimas1997introduction}. Using this property and the chain rule, we can compute $\nabla_{\alpha} \mathcal{V}_i(\alpha)$ after solving the linear program proposed for $\mathcal{V}_i(\alpha)$. In our implementation, we used the built-in function in Gurobi \cite{gurobi} to find the optimal dual variables of the constraints that contain the parameters. It should be noted that in \eqref{reduced_dist_set}, additional tighter zonotope order reduction approaches can also be used to further minimize conservatism. In our implementation, we used the Boxing method to keep the chain rule simpler. 

There are $\eta$ gradients for each point in the parameter space, where $\eta$ is the number of subsystems. In other words, each subsystem selfishly proposes a direction that best suits it at each point in the parameters space. By calculating the sum of these directions, we can determine which direction is the best for the aggregated system. This is because the suggested objective function is in the form of $\sum_{i \in \mathcal{I}} \mathcal{V}_i(\alpha)$, and its derivative with respect to $\alpha$ is equal to the sum of the gradients computed by each subsystem. Finally, we stop updating the parameters when the value of the potential function reaches a plateau and there is no further improvement in the objective function after the set of parameters is updated.

As stated in Subsection \ref{composition_validity}, the existence of a zero potential function does not imply that the set of parameters is also valid. There are two ways to address this issue. First, we can add the directed Hausdorff distance between parameterized sets and admissible sets to the objective function in \eqref{LTV_guarantees}, as we did for correctness. Then, we can determine a set of parameters that will cause the new objective function to reach zero. Second, we can limiting the domain of the potential function to the set of valid parameters. To avoid leaving the valid set of parameters after each update, the parameters must be projected to the set of valid parameters. In this case, the potential function remains convex because the set of valid parameters is convex. The second method projects the parameters onto the target set in a single step, whereas the first requires many iterations. Here, we pick the second option due to its speed. The projection is performed via the following optimization problem:
\begin{equation}\label{alpha_projection}
\begin{aligned}
    \alpha^{x^p}_{i,t} = \underset{\alpha^p}{\text{argmin }} \quad & ||\alpha^{x^p}_{i,t} -\alpha^x_{i,t}||_2 \\
    \textrm{s.t.} \quad & \mathcal{X}_{i,t}(\alpha^{x^p}_{i,t}) \subseteq X_{i,t},
\end{aligned}
\end{equation} 
which projects $\alpha_{i,t}^x$ to $\alpha^{x^p}_{i,t}$ for parameters corresponding to the state space of subsystem $i$ at time step $t$. There is a similar QP for the parameters corresponding to the control space for each subsystem as well. The algorithm is shown in Fig. \ref{fig:flowchart} as a flowchart.








  
  
  
  
  

  
    



\tikzset{%
  >={Latex[width=2mm,length=2mm]},
            base/.style = {rectangle, rounded corners, draw=black,
                          minimum width=4cm, minimum height=1cm,
                          text centered, font=\sffamily},
  activityStarts/.style = {base, fill=blue!30},
      startstop/.style = {base, fill=green!30 , minimum width=1.5cm, minimum height=1cm},
    activityRuns/.style = {base, fill=green!30},
         process/.style = {base, minimum width=1.5cm, fill=orange!15,
                          font=\ttfamily},
        sum/.style = {circle , fill=yellow!20, draw=black}
}
  
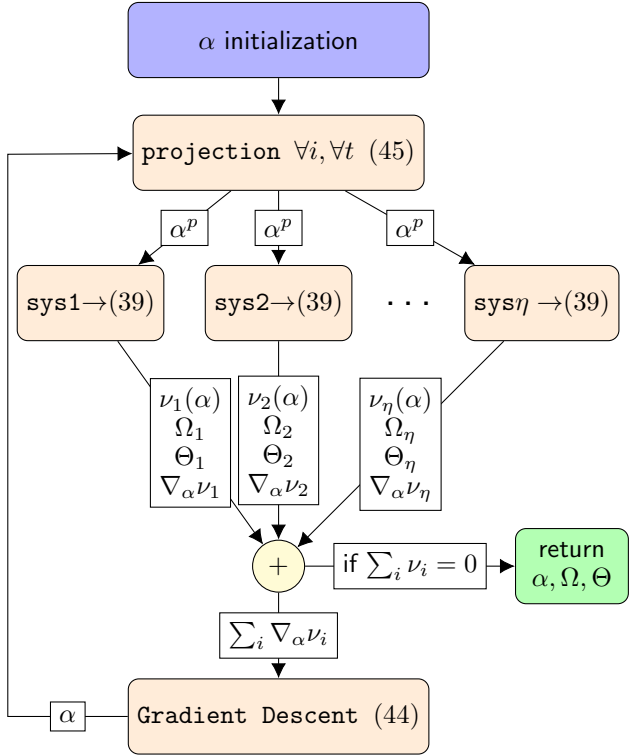
\begin{figure}
\centering
\begin{tikzpicture}[node distance=1.5cm,
    every node/.style={fill=white, font=\sffamily}, align=center]
  \node (alpha_init)             [activityStarts]              {$\alpha$ initialization};
  \node (projection)     [process, below of=alpha_init]          { projection $\forall i,\forall t$ \eqref{alpha_projection}};
  \node (nu_1)      [process , below of=projection , xshift=-2.5cm , yshift=-0.5cm]    {sys1$\rightarrow$\eqref{LTV_guarantees}};
  \node (nu_2)      [process, right of=nu_1 , xshift=1.0cm ]   {sys2$\rightarrow$\eqref{LTV_guarantees}};
  \node (nu_eta)      [process, right of=nu_2 , xshift= 2cm]   {sys$\eta\rightarrow$\eqref{LTV_guarantees}};
  \node (sum)       [sum, below of= projection , yshift= -4cm ]      {$+$};
  \node (GD)        [process , below of=sum , yshift = -0.5cm]        {Gradient Descent \eqref{gradient_descent}};
  \node (terminate)     [startstop , right of=sum , xshift= 2.4cm]        {return \\ $\alpha,\Omega,\Theta$};
  \draw[->]             (alpha_init) -- (projection);
  \draw[->]     (projection) -- node [rectangle, draw=black]  {$\alpha^p $} (nu_1);
  \draw[->]     (projection) -- node [rectangle, draw=black] {$\alpha^p $} (nu_2);
  \draw[->]     (projection) -- node [rectangle, draw=black] {$\alpha^p $} (nu_eta);
  \path (nu_2) -- node[auto=false]{\textbf{. . .}} (nu_eta);
  \draw[->]      (nu_1) -- node [rectangle , draw=black] { $\nu_1(\alpha)$ \\ $\Omega_1$ \\ $\Theta_1$ \\ $\nabla_\alpha\nu_1$} (sum) ;
  \draw[->]      (nu_2) -- node [rectangle , draw=black] { $\nu_2(\alpha)$ \\ $\Omega_2$ \\ $\Theta_2$ \\ $\nabla_\alpha\nu_2$} (sum);
  \draw[->]      (nu_eta) -- node [rectangle , draw=black] { $\nu_\eta(\alpha)$ \\ $\Omega_\eta$ \\ $\Theta_\eta$ \\ $\nabla_\alpha\nu_\eta$ } (sum);
  \draw[->]     (sum) -- node [rectangle, draw=black] {$\sum_i \nabla_\alpha\nu_i$} (GD);
  \draw[->]     (GD)  -- node [rectangle, draw=black] {$\alpha$} ++(-3.6,0) -- ++(0,3.74) -- ++(0,3.74) -- (projection);
  \draw[->]     (sum) -- node [rectangle, draw=black] {if $\sum_i \nu_i =0$}(terminate);

  \end{tikzpicture}
\caption{Flowchart representation of the algorithm from Sec. \ref{compos_method}. The parameters are initialized and then projected to the set of valid parameters. Then each subsystem independently determines the AG contracts and its compartment of the potential function. One of the following three scenarios occurs once the potential function has been determined: (i) If it is zero, the result is returned; (ii) If it hits a plateau, $k$ is increase by one unit and the algorithm is restarted with the most recently updated parameters (this step is not shown); (iii) If (i) or (ii) do not apply, the gradient information is used to update the parameters and the the algorithm reiterates.} \label{fig:flowchart}
\end{figure}

\section{Distributed Robust MPC With Stability Guarantee} 
\label{section:mpc}

In this section, we show that our solution for Problem \ref{problem_viable} can be used for the problem of Distributed Robust MPC when solved recursively. Each MPC needs to have: (i) a given initial state $x_{i,0}, \forall i \in \mathcal{I}$, (ii) a quadratic cost function $J_i := \sum_t {\mathrm{x}^i_t}^T Q \mathrm{x}^i_t + {\mathrm{u}^i_t}^T R \mathrm{u}^i_t$ for each subsystem $i$, and (ii) the RCI sets $(\Omega_i)$ as terminal sets. In \cite{chisci1996dual,scokaert1999suboptimal}, it is shown that considering a terminal condition for an MPC ensures the stability and recursive feasibility of the approach. The terminal condition specifies a target set for the MPC's last time step, which is encoded using set containment constraints similar to those described in the preceding sections. Using the provided solution for problem \ref{Problem_RCI}, the RCI sets can be determined in offline mode. This section incorporates these adjustments into the solution of Problem \ref{problem_viable} presented above.

\textbf{Parameters:} The potential function is defined in the same way as before, but the optimization in \eqref{LTV_guarantees} and the parameter set are modified. Since the state of each subsystem changes in the corresponding state space, the centers of the baseline sets $\bar{c}^x_{i,t}$ and $\bar{c}^u_{i,t}$ in the parametric AG contracts can no longer be set in advance, and are included in the set of parameters.
The extended set of parameters is denoted by $ \alpha^{ext} := \{ \alpha^x_i , \alpha^u_i , \bar{c}_i^x , \bar{c}_i^u \}_{i \in \mathcal{I}}$, where $\alpha^x_i = \alpha^x_{i,1},...,\alpha^x_{i,h-1} , \alpha^u_i = \alpha^u_{i,1},...,\alpha^u_{i,h-1} , \bar{c}_i^x = \bar{c}_{i,1}^x,...,\bar{c}_{i,h-1}^x , \bar{c}_i^u = \bar{c}_{i,1}^u,...,\bar{c}_{i,h-1}^u$. 
The convexity property of the potential function  is not affected by this change in the parameter set. However, the initial time step has no parameters anymore because the initial states are known. Therefore, there is no need to construct parameterized sets for it (i.e., $T^i_0 = 0_{n_i}$ and  $M^i_0 = 0_{m_i}$). Also, since the goal sets, which are the terminal sets, are already known, there is no need to specify any parameterized sets for the last time step. Finally, the potential function is reconstructed as the solution to the following optimization problem:
\begin{subequations} \label{synthesis}
\begin{align}
    & \mathcal{V}_i(\alpha^{ext}) = \underset{\bar{\mathrm{x}}^i_t,T^i_t,\bar{\mathrm{u}}^i_t,M^i_t , d_t^x , d_t^u}{\text{min}}
    \begin{aligned}[t]
      & \sum_{t \in \mathbb{N}_{1,h}} d^x_t + \sum_{t \in \mathbb{N}_{1,h-1}} d^u_t + \omega J_i
    \end{aligned} \notag \\
    &\text{subject to} \notag \\
    & [0_{k-n_i}, D_{i,0}^{aug}] = T^i_1, \label{first_step_generator}\\
    &  \mathcal{Z}( \bar{d}_{i,0}^{aug} , D_{i,0}^{aug})  =  \reduce(\sum_{j \ne i}A_{ij}\bar{\mathrm{x}}^i_0 +  B_{ij}\bar{\mathrm{u}}^i_0 \oplus D_i), \label{first_step_disturbance}\\
    & [A_{ii}T^i_t+ B_{ii} M^i_t , D_{i,t}^{aug} ] =  [T^i_{t+1}], \hspace{2 mm} \forall t \in \mathbb{N}_{1,h-1} \label{eq:mpc_a} \\
    & A_{ii}\bar{\mathrm{x}}^i_t + B_{ii} \bar{\mathrm{u}}^i_t + \bar{d}_{i,t}^{aug} = \bar{\mathrm{x}}^i_{t+1},  \hspace{2 mm} \forall t \in \mathbb{N}_{h-1} \label{eq:mpc_b}\\
    & \mathcal{Z}( \bar{d}_{i,t}^{aug} , D_{i,t}^{aug})  =  \reduce(\bigoplus_{j \ne i}A_{ij}\mathcal{X}_{j,t}(\bar{c}_{i,t}^x , \alpha^x_{j,t}) \notag \\ & \hspace{0.8 cm} \oplus \bigoplus_{j \ne i} B_{ij}\mathcal{U}_{j,t}(\bar{c}_{i,t}^u , \alpha^u_{j,t}) \oplus D_i), \forall t \in \mathbb{N}_{1,h-1} \label{eq:mpc_c}\\
    & \mathcal{Z}(\bar{\mathrm{x}}^i_t,T^i_t) \subseteq \ \mathcal{X}_{i,t}(\bar{c}_{i,t}^x,\alpha^x_{i,t}) \oplus \mathcal{Z}(0,d^x_t I_{n_i}) ,  \forall t \in \mathbb{N}_{1,h-1} \label{eq:mpc_d}\\
    & \mathcal{Z}(\bar{\mathrm{x}}^i_h,T^i_h) \subseteq \ \Omega_i \oplus \mathcal{Z}(0,d^x_h I_{n_i}) , \hspace{2 mm}  \label{eq:mpc_dd}\\
    & \mathcal{Z}(\bar{\mathrm{u}}^i_t,M^i_t) \subseteq \ \mathcal{U}_{i,t}(\bar{c}_{i,t}^u,\alpha^u_{i,t}) \oplus \mathcal{Z}(0,d^u_t I_{m_i}) ,  \forall t \in \mathbb{N}_{1,h-1} \label{eq:mpc_e}\\
    & \bar{\mathrm{x}}^i_0 = x_{i,0}, \label{state_first_step}\\
    & d_t^x , d_t^u \geq 0 , \hspace{2 mm} \forall t \in \mathbb{N}_{1,h},\label{eq:mpc_hausdorffdistance}
\end{align}
\end{subequations}
where constraints \eqref{first_step_generator} and \eqref{first_step_disturbance} are originated from Theorem \ref{Thrm_viable set_single}, while the generators are assigned to zero matrices, where $k$ is the number of columns in $T_1^i$. Also, \eqref{state_first_step} specifies the current state and \eqref{eq:mpc_dd} imposes the terminal condition when $d_h^x=0$, which happens when $\mathcal{V}^*=0$.

\textbf{Objective function:} In our formulation, we capture the goal sets using the set inclusion constraints. However, one might be interested in the optimality of the MPC formulation, instead of only a feasible solution to a goal set. In this case, we wish to minimize the potential function as well as a user-defined cost function $J_i$. The main challenge is to satisfy all the set containment requirements while minimizing $J_i$. A very little protrusion of one of the sets, on the other hand, may cause a cascade of instability among the subsystems.
The objective function is the weighted sum shown in \eqref{synthesis}, where $\omega$ is a scalar weight. Given the user-defined cost function, we start the iterations with $\omega$ set to zero. Following the discovery of correct contracts, where the objective functions with zero weight for all subsystems are zero, the weight $\omega$ is increased as long as $\sum_{t \in \mathbb{N}_{1,h}} d_t^{x^*} + \sum_{t \in \mathbb{N}_{1,h-1}} d_t^{u^*}$ stays zero for all subsystems (i.e. the contracts remain correct), where $ d_t^{x^*}$ and $d_t^{u^*}$ are optimal values for $d_t^{x}$ and $d_t^{u}$, respectively. Since the weight is increased to the point that one of the subsystems disregards correctness, the number of iterations is limited.
\textbf{Communication:}
The method proposed here can be seen as an \emph{iterative} distributed MPC \cite{christofides2013distributed} with disturbance rejection, since neighboring subsystems share information, including parameters and generated gradients, as many times as needed until convergence, at each time step. Following convergence, each subsystem implements the optimal value calculated for $\bar{\mathrm{u}}^i_0$, and the cycle continues. Note that after each convergence, the decentralized output controllers give a guaranteed solution to the goal sets, making our method resilient to connection loss issues. By repeatedly solving the problem, on the other hand, the most up-to-date information is used, which leads to a more optimal solution. In addition, because we may reuse the optimal variables from the previous iteration, there is no need to randomly initialize the parameters each time the MPC is solved. The \textit{warm start} speeds up the convergence process.

\section{Case Studies}
\label{sec_example}
This section contains four case studies. The first is a simple example of finding decentralized RCI sets for a connected LTI system with coupled state constraints. The second example shows the viable sets of a connected network of LTV systems. The goal is to demonstrate sequences of decentralized time-limited viable sets. The third illustrates the scalability of the proposed approach by comparing its computation time with those obtained using a series of benchmark methods. The final example demonstrates the capability of our proposed MPC approach in a power network.
The source code is available on GitHub\footnote{https://github.com/Kasraghasemi/parsi}. A MacBook Pro 2.6 GHz Intel Core i7 with Gurobi \cite{gurobi} as the optimizer was used for the implementations.

\subsection*{Case Study 1}
Consider an LTI system with
\begin{equation*}
A = \left[\setlength{\arraycolsep}{1pt}
  \renewcommand{\arraystretch}{0.6} \begin{array}{c c| c c |c c}
	0.1 & 0.1 & 0.1 & 0.02 & 0.04 & -0.02 \\
    0 & 0.1  & -0.1 & 0.06 & 0 & -0.04 \\
\hline
    -0.08 & 0 & 0.1 & 0.1 & 0.04 & 0.1 \\
    0.02 & -0.06 & 0 & 0.1 & 0.08 & 0 \\
\hline
    0 & 0 & 0.04 & 0.02 & 0.1 & 0.1 \\
    0 & 0 & 0.02 & 0.1 & 0 & 0.1 
\end{array}\right]
\end{equation*} 
as the aggregate matrix. We assume that
the dynamics of the subsystems are only coupled through states, hence $B$ is a block diagonal matrix. The state constraints are expressed in the following coupled form:
\begin{equation*}
    X = \mathcal{Z}(0_6 , 
    \left[\setlength{\arraycolsep}{1pt}
  \renewcommand{\arraystretch}{0.6} \begin{array}{c c| c c |c c}
	1 & 0.1 & 0.2 & 0 & 0 & 0 \\
    0.1 & 1  & 0.02 & 0 & 0 & 0.1 \\
\hline
    0 & 0.01 & 1 & 0 & 0.1 & 0.1 \\
    0.2 & 0 & 0.03 & 1 & 0 & 0.1 \\
\hline
    0 & 0.1 & 0.1 & 0 & 1 & 0.2 \\
    -0.1 & -0.02 & 0.1 & 0 & 0 & 1 
\end{array}\right]).
\end{equation*}
The subsystems are defined by \eqref{subsystems_LTV} with $
     B_{ii} = \begin{bmatrix} 
0 \\
0.1
\end{bmatrix},
U_i = \mathcal{Z}(0_1 , [1]) ,
D_i = \mathcal{Z}(0_2 ,     \begin{bmatrix} 
0.3 & 0 \\
0 & 0.3
\end{bmatrix})$.
Our goal is to identify decentralized RCI sets $\Omega_1,\Omega_2, \Omega_3$ by breaking the 6-dimensional system into three 2-dimensional subsystems. The first step is to decouple the state admissible set $X$ into $X_1,X_2$, and $X_3$. We accomplish this by using the semi-definite program in \eqref{decompose_optimization}. The outcomes are illustrated in Fig \ref{decomposition_states_fig}. The decentralized RCI sets and the correct set of parameters are determined using the compositional method provided in Sec. \ref{compos_method}. 
Fig \ref{valid_alpha_fig} shows the projection of potential function and the correct set of parameters on $\alpha^x_1[1]-\alpha^x_1[2]$ space, which are two of the parameters. 

\begin{figure} 
\begin{center}
\includegraphics[height=3.0cm]{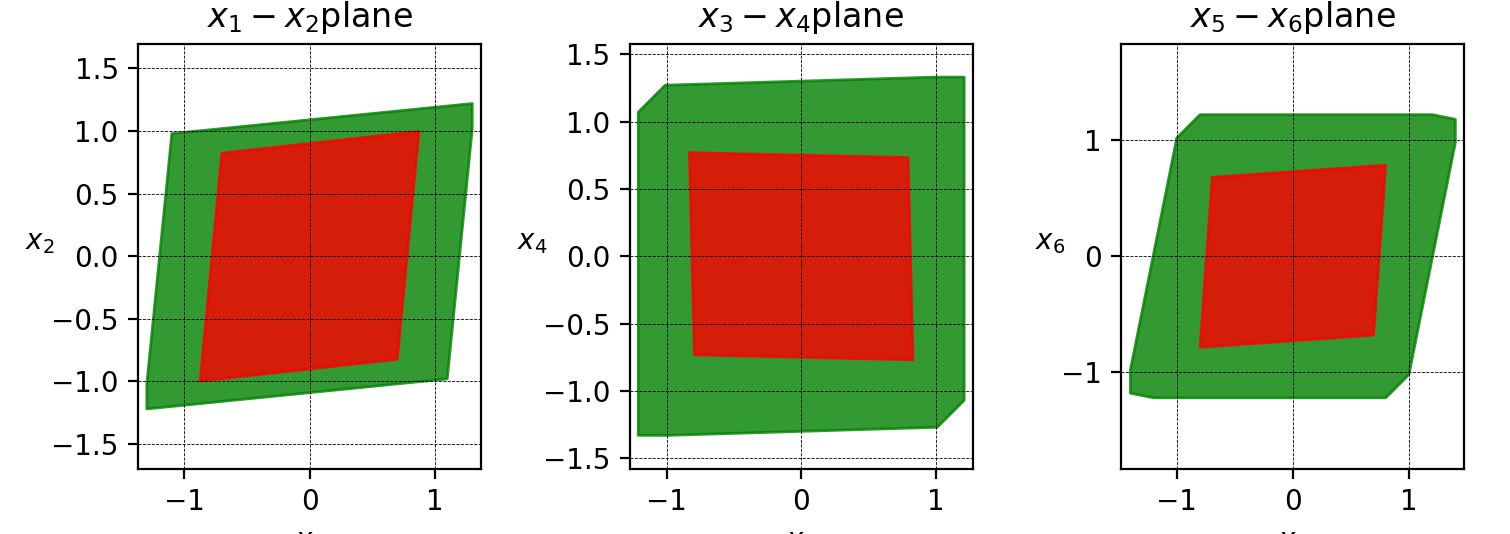}    
\caption{Case Study 1: The projections of the 6-dimensional admissible sets to the corresponding plane for each subsystem are shown in green. The decomposed sets are depicted in red. As expected, all the red sets are subsets of the green sets.}  
\label{decomposition_states_fig}                                 
\end{center}                                 
\end{figure}
\begin{figure}[t] 
  \centering
  \includegraphics[width=0.24\textwidth]{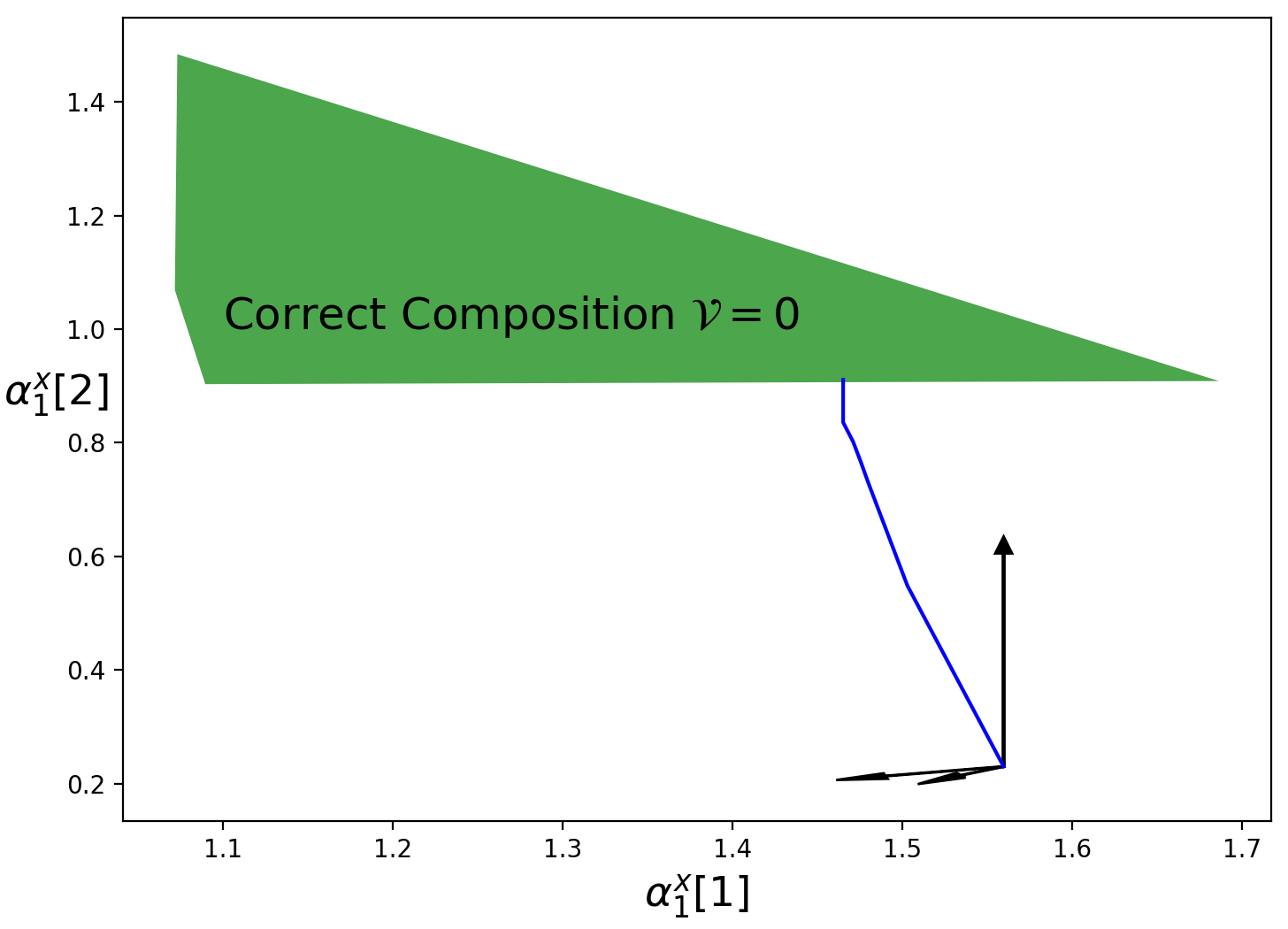}
  \includegraphics[width=0.23\textwidth]{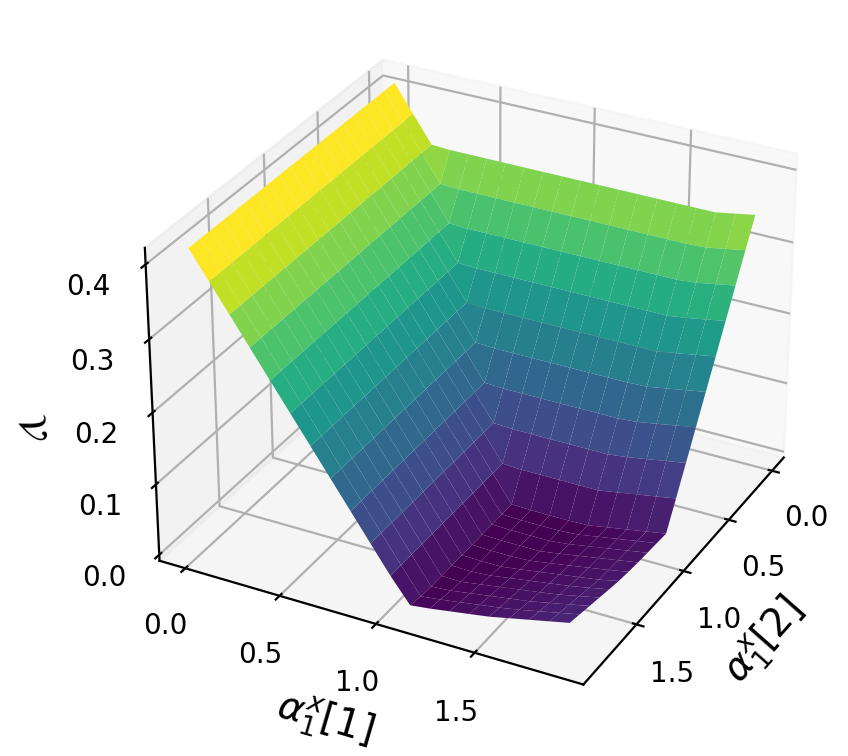}
  \caption{Case Study 1: [Left] The green polytope shows the projection of the correct set of parameters onto the plane of two of the parameters. The blue trajectory represents the path obtained using the compositional method to get to the final correct parameters, starting from random parameters. The three arrows show three preferred directions ($\nabla_\alpha \nu_i$) for each of the three subsystems at the start point. [Right] The potential function as a function of the same two parameters, with all other parameters set at their final correct values. Its level set at zero corresponds to the correct set of parameters. The potential function is a piece-wise affine function.}
  \label{valid_alpha_fig}
\end{figure}

\subsection*{Case Study 2}
Our goal in this case study is to determine the sequences of decentralized limited-time viable sets for a 8-dimensional LTV system with 
\begin{equation*}
    0.01 \left[ \setlength{\arraycolsep}{1pt}
  \renewcommand{\arraystretch}{0.6} \begin{array}{c c| c c |c c | c c}
t  & 100  & 0.1 & 0.1 & 0 & 0 & -\sin(t) & t \\
0 & 100 & 0.1 & 0.1 & 0 & 0 & \log(t+1) & \cos(t)\\
\hline
0.1 & 0.1 & 100 & 100 & 0.1 & 0.1 & 0 & 0 \\
0.1 & 0.1 & 0 & 100 & 0.1 & 0.1 & 0 & -1 \\
\hline
0 & 0 & 0.1 & 0.1 & 100 & 100 & 0 & 1 \\
0.1t^2 & 0 & 0.1 & 0.1 & 0 & 100 & 0 & 0 \\
\hline
t & 0 & 0 & 0 & 0 & 0 & 100 & 100 \\
-t & 1 & 0 & 1 & 0 & 0 & 0 & t 
\end{array}\right],
\end{equation*}
as the aggregate time-variant $A_t$ matrix.
There are four subsystems in the form of \eqref{subsystems_LTV} that are only coupled through states in their dynamics. We assume $B_{ii,t} = [0 , 0.1]^T,
D_{i,t} = \mathcal{Z}(0_2 , \diag([0.4 , 0.4])$ ,$ X_{1,t} = \mathcal{Z}(0_2 ,  \diag([ 5 -  \sin\dfrac{\pi t }{15} , 6 - 5.5 \sin \dfrac{\pi t}{12} ])$ ,$X_{2,t} = \mathcal{Z}(0_2 ,  \diag([ 5 -  2 \sin\dfrac{\pi t }{8} , 6 - 5.5 \sin \dfrac{\pi t}{20} ])$ ,$X_{3,t} = \mathcal{Z}(0_2 ,  \diag([ 5 -   \cos\dfrac{\pi t }{15} , 6 - 5.5 \cos \dfrac{\pi t}{12} ])$ ,$X_{4,t} = \mathcal{Z}(0_2 ,  \diag([ 5-\dfrac{t }{5} ,5 -  \dfrac{t }{5}  ]) , U_i = \mathcal{Z}(0,[10])$
The horizon is $15$ ($t \in \mathbb{N}_{15}$).
The results, which are displayed in Fig \ref{time_limited_viable_set_fig}, were found using the centralized method described in Sec. \ref{centralized_approach}. The sizes of the viable sets tend to increase in time due to the additive disturbances. Furthermore, since the objective function is designed to reduce the area of the viable sets, the viable sets at the first time step are points, and there are no sets for the first time step in Fig. \ref{time_limited_viable_set_fig}.
\begin{figure}[t]
  \centering
  \includegraphics[scale=0.65]{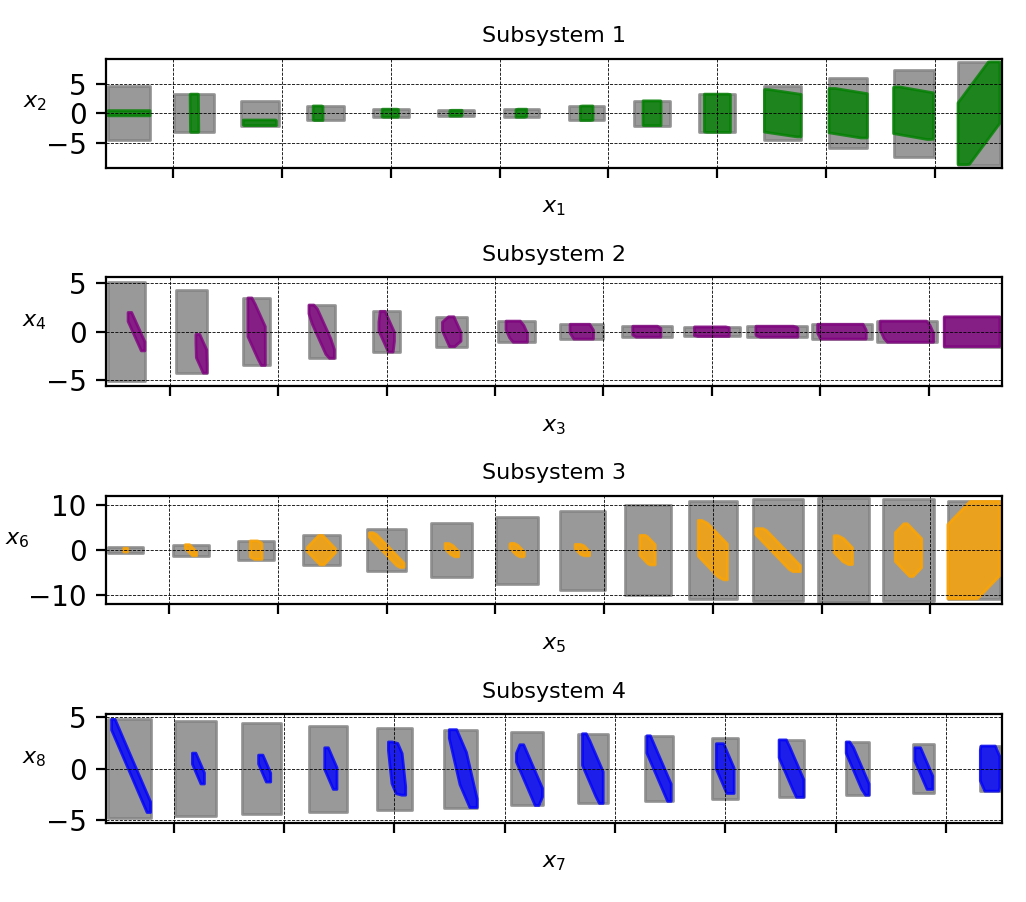}
  \caption{The decentralized viable sets for the four coupled LTV subsystems from Case Study 2. To avoid sets from overlapping in each subsystem, each set is shifted to the right of the preceding set. In fact, they all have a common center. The gray areas are state constraints $X_{i,t}$, which correctly contain the viable sets entirely.}
  \label{time_limited_viable_set_fig}
\end{figure}

\subsection*{Case Study 3}
This example is adopted from \cite{Motee2008}, which creates a vast random network of connected LTI subsystems by randomly spreading a predetermined number of points in a $100$ by $100$ square area. Each point represents a subsystem with $
    A_{ii} = \begin{bmatrix} 
    1 & 0.2\\
    0 & 1
    \end{bmatrix} , 
B_{ii} = \begin{bmatrix} 
    0\\
    0.2
    \end{bmatrix},
    X_i = \mathcal{Z}(0, 5I_2),
U_i = \mathcal{Z}(0,5I_1),
D_i = \mathcal{Z}(0, 0.1I_2).
$
If the Euclidean distance $\dist(i,j)$ between two points representing two subsystems $i$ and $j$ is less than $10$, they are dynamically coupled with the following $A_{ij} =
    \dfrac{\lambda}{1 + \dist(i,j)}
    \begin{bmatrix} 
    1 & 1\\
    1 & 1
    \end{bmatrix}$, where $\lambda$ is a constant scalar that limits the effect of couplings when the number of subsystems is large.
We compare three approaches to obtain the RCI sets: (i) \textbf{Cen} uses Theorem \ref{thrm_single_RCI} to construct a centralized RCI set as an output of centralized optimization problem, (ii) \textbf{DecCen} uses the centralized optimization presented in Sec. \ref{centralized_approach}, with a set of decentralized RCI sets as the result, and (iii) \textbf{Compose} uses the compositional method described in Sec. \ref{compos_method}, which likewise produces decentralized RCI sets. Note that in the LTI case, we need to apply the approach from Remark \ref{rmrk_RCI} to make all constraints linear.

Table~\ref{tabel1} shows the average synthesis times  for the three approaches and the average number of iterations for the compositional method for 10 sample runs for systems with increasing state dimensions $n$. The reported times include only the time spent solving the optimization problems, not the time spent constructing them. Approach (i) failed to handle large dimensions and timed out. Given the polynomial computational complexity of linear programs, this was expected. While having a larger number of variables, approach (ii) provides better results, which is mainly due to the sparsity in the control matrices. The sparsity originates from the fact that we are considering separable viable and action sets for different subsystems. The second strategy is the quickest of the three in smaller dimensions. Approach (iii), on the other hand, demonstrates its scalability by expanding to massive dimensions. The synthesis time could be reduced further by utilizing parallel computation techniques using multi-core computers, which we will address in future work.
\label{case_table}
\begin{table}[t]
\caption{Average synthesis times (in seconds) for the three approaches from Case Study 3 and the average number of iterations for the compositional method}
\resizebox{0.48\textwidth}{!}{
\begin{tabular}{|c|c|c|c||c|c|}
\hline
$n$ & $\lambda$ & \texttt{Cen} & \texttt{DecCen} & \texttt{Compose} & Avg. Iter.
\\ \hline
10    & 0.1       & 0.26     & 0.01        & 0.03       & 3.4       \\ \hline
20    & 0.1       & 3.23     & 0.10        & 0.05       & 38.3      \\ \hline
30    & 0.1       & 19.60    & 0.26        & 0.24       & 99.0      \\ \hline
50    & 0.01      & 193.69   & 1.03        & 0.56       & 24.8      \\ \hline
100   & 0.01      & ---      & 4.34        & 1.66       & 33.8      \\ \hline
200   & 0.01      & ---      & 37.03       & 6.46       & 73.6      \\ \hline
500   & 0.01      & ---      & 576.72      & 10.63      & 47.0      \\ \hline
1000  & 0.001     & ---      & ---         & 12.40      & 30.3      \\ \hline
10000 & 0.0001    & ---      & ---         & 153.00     & 30.0      \\ \hline
\end{tabular} 
}
\label{tabel1}
\end{table}

\subsection*{Case Study 4}

The load-frequency control (LFC) problem in power networks \cite{Camacho2015} is addressed in this example. A power network is a collection of areas, where each area has its own generator and users, and can transfer its excess power to other areas through a set of links. The phase angle $\delta_{i,t}$ and frequency $f_{i,t}$ comprise the state of area/subsystem $i$ at time $t$. The control input is the amount of power $u_{i,t}$ produced by the generator. The goals are to keep the system in a state that is close to its nominal value and to keep the power exchanges between areas on schedule. Since only the departure from the nominal state matters, the state vector for subsystem $i$ is written as $x_{i,t} = [\Delta \delta_{i,t} , \Delta f_{i,t}]^T$, with $\Delta$ denoting the divergence from the nominal value. The system dynamics from \cite{Camacho2015} is discretized using the Euler method, resulting in $
    A_{ii} = \begin{bmatrix} 
    1, & 2 \pi \Delta t\\
    \dfrac{- \Delta t k_{p_i}}{2 \pi T_{p_i}}  \displaystyle\sum_{j \in \mathcal{N}_i}K_{s_{ij}},  & 1 - \dfrac{\Delta t}{T_{p_i}}
    \end{bmatrix} ,
    A_{ij} = \begin{bmatrix} 
    0, & 0 \\
    \dfrac{ \Delta t k_{p_i} K_{s_{ij}}} {2 \pi T_{p_i}} ,  & 0
    \end{bmatrix} ,
    B_{ii} = \begin{bmatrix} 
    0 \\
    \dfrac{K_{p_i} \Delta t}{ T_{p_i}}
    \end{bmatrix}$,
where $\mathcal{N}_i$ is a set containing the neighbours of subsystem $i$ and $\Delta t , K_{p_i} , K_{s_{ij}}, T_{p_i}$ are the discretized time interval, system gain, synchronizing coefficient between area $i$ and $j$, and system model time constant, respectively, which are set to  0.1s, 110, 0.5, and 25s for all areas. Consider a network of four fully connected areas with bidirectional connections between each pair, with the following bounds over control input and load disturbance for each subsystem:
$|u_{i,t}| \leq 0.1$ and $D_i = \mathcal{Z}(0_2 ,\diag([ \epsilon , \dfrac{-\Delta t K_{p_i} \Delta P_{d_i}}{T_{p_i}} ]))$, where $\epsilon = 10^{-10}$ and $\Delta P_{d_i}=0.01p.u$ is the load disturbance for area $i$.
The initial state for the aggregate system is $[-0.2 , 0.1 , 0.02 , 0.01 , 0.1 , -0.05 , -0.03 , -0.01]^T$ and the goal is to go to the set where $|\Delta \delta_{i,h}| \leq  0.01 , |f_{i,h}| \leq  0.01 , \forall i$.
Using the solution to Problem \ref{Problem_RCI}, we first find a set of decentralized RCI sets inside the goal set. Then, we proceed with the strategy from Sec. \ref{section:mpc} with horizon $h=5$. Also, in the cost function, the $R$ and $Q$ matrices are set to zero and the identity matrix, respectively. The output viable sets after solving the problem are illustrated in green for each subsystem separately in Fig. \ref{power_system_mpc_fig}, where as expected, the final viable set in each subsystem ends up inside its goal set. 
If the goal is just to reach the goal set, there is actually no need to solve the problem again in the next time step for the new initial states. Each subsystem can implement the controller in the derived sequence, which by construction. has been shown it maintains the states inside the derived viable sets.
\begin{figure}[t]
  \centering
  \includegraphics[scale=0.6]{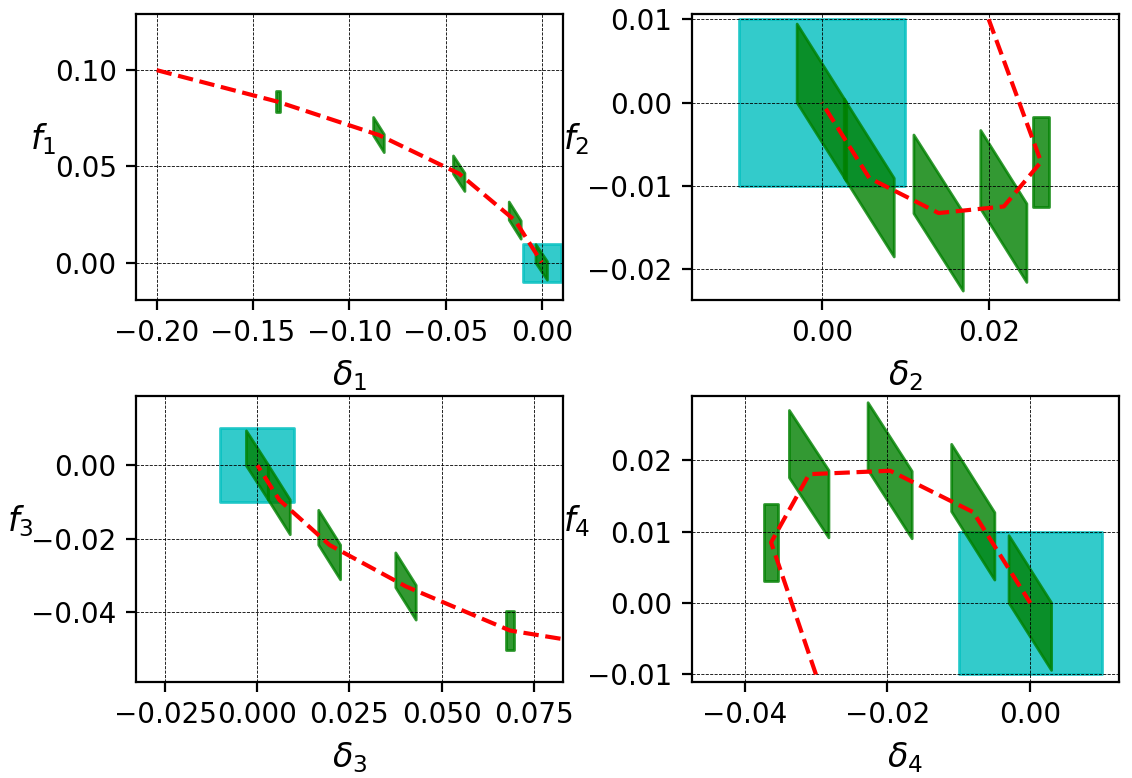}
  \caption{The goal sets and sequences of viable sets for each area of the power network from Case Study 4 are shown in blue and green, respectively. The red dashed lines begin at the initial state and link the centers of the viable sets, i.e., the nominal predicted trajectory.}
  \label{power_system_mpc_fig}
\end{figure}

\section{Conclusion and Future Work}

We proposed a convex parameterization of assume-guarantee contracts that facilitates compositional control synthesis of large-scale linear systems with uncertainties. We achieved linear time complexity, while other state-of-the-art algorithms offer polynomial time complexity at best. Extensions to nonlinear and (partially) unknown dynamics, investigating probabilistic assume-guarantee contracts, and implementations in robotic experimental platforms are directions of future research.

\begin{ack}                               %
This work was partially supported by the National Science Foundation under grants 
IIS-2024606 and IIS-1723995. 
\end{ack}

\bibliographystyle{plain}        
\bibliography{references}           
\section*{Appendix}
{\bf Proof of Lemma \ref{lemma_weighted}}
Using constraints (\ref{zon_containment}) for $\mathcal{Z}(\bar{c}_1,G_1)  \subseteq  \mathcal{Z}(\bar{c}_2, G_2\diag(\alpha))$, we have:
\begin{equation}
\begin{array}{c}
G_1 = G_2 \diag(\alpha) \Gamma , \\
c_2 - c_1 = G_2 \diag(\alpha) \gamma ,   || [\Gamma,\gamma] ||_\infty \leq 1.
\end{array}
\end{equation}
We can replace $\diag(\alpha) \Gamma$ and $\diag(\alpha) \gamma$ with $\Gamma ^{new}$ and $\gamma^{new}$, respectively. We have:
\begin{subequations}
\begin{equation}
G_1 = G_2 \Gamma^{new},
\end{equation}
\begin{equation}
c_2 - c_1 = G_2 \gamma^{new},
\end{equation}
\begin{equation} \label{zon_con_mod}
|| [\diag(\alpha^{-1}), \Gamma^{new},\diag(\alpha^{-1})\gamma^{new}] ||_\infty \leq 1,
\end{equation}
\end{subequations}
where $\alpha^{-1}$ is element-wise. In  (\ref{zon_con_mod}), each row of matrix $[\Gamma^{new},\gamma^{new}]$ is divided by the corresponding element in vector $\alpha$. Because all the elements of $\alpha$ are positive, we can multiply the inequality by $\diag(\alpha)$ and have:
\begin{equation}
    [|\Gamma^{new}|,|\gamma^{new}|]  \mathbb{1}_s \leq \alpha
\end{equation}
\section*{Solution to Subproblem \ref{subproblem}}
By definition, we have $\{ [x_1^T, x_2^T, ..., x_\eta^T]^T | \forall x_i \in X_i , i\in \mathcal{I} \} =  \prod_{i \in \mathcal{I}}X_i$. Thus, it can be seen that  for all $x_i\in X_i, [x_1^T, x_2^T, ..., x_\eta^T]^T \in X$ holds, if
\begin{equation} \label{decomposition_sufficient_condition}
    \prod_{i \in \mathcal{I}}X_i \subseteq X, 
\end{equation}
where $X_i = \mathcal{Z}(c_i,G_i)$ and
\begin{equation*}
    \prod_{i\in \mathcal{I}} X_i = \mathcal{Z}([c_1^T, \cdots , c_\eta^T]^T,\text{BlockDiag}([G^x_1, \cdots, G^x_\eta])).
\end{equation*}
Since $c_i$s and $G_i^x$s are unknown. we encode \eqref{decomposition_sufficient_condition} into a linear program with the help of the linear containment encoding proposed in \cite{sadraddini2019linear}. However, we are interested in finding the maximum volume for $\prod_{i \in \mathcal{I}}X_i $.


\begin{assumption} \label{G_square}
Each $G^x_i, i\in \mathcal{I}$ is a symmetric square positive semi-definite matrix.
\end{assumption}

The assumption that the matrices are square makes the calculation of the volume of $\prod_{i\in \mathcal{I}} X_i$ easier by the formulation in \eqref{zonotope_volume}. However, the function in \eqref{zonotope_volume} is still non-concave, which is not desirable for an optimization. It can be shown that by further assuming that all $G_i^x$ are symmetric positive semi-definite $((G^x_i)^T=G^x_i \text{ and } G^x_i \succeq 0)$, the logarithm of the volume in \eqref{zonotope_volume}, which is equal to
\begin{multline}
    \text{log} ( \text{Volume} ( \prod_{i\in \mathcal{I}} X_i )) = \\ n\text{log}(2) + 2\text{log} ( \text{det} ( \text{BlockDiag}([G^x_1, \cdots, G^x_\eta]) ) )
\end{multline}
becomes concave because $log(det(G))$ when $G$ is a positive semi-definite matrix is a concave function. Considering the assumptions and removing the constants in the objective function, the final optimization problem for decomposing $X$ is as follows:
\begin{subequations} \label{decompose_optimization}
\begin{align}
    \min _{c_i,G^x_i } \quad & \text{log} ( \text{det} ( \text{Blk}([G^x_1, \cdots, G^x_\eta]) ) ) \\
    \textrm{s.t.} \quad &\prod_{i \in \mathcal{I}}{\mathcal{Z}(c_i,G^x_i)} \subseteq X, \hspace{2 mm} \label{eq:subset} \\
    & \text{Blk}(G_1^x, ..., G_\eta^x) \succeq 0, \\
    & G_i^x \text{ is a symmetric square matrix} , \forall i \in \mathcal{I}
\end{align}
\end{subequations}
This can be solved with any semi-definite programming solver. In this paper, we solved it by SCS \cite{SCS} using drake \cite{drake}.
                            
\end{document}